\newtcolorbox[auto counter]{protocol}[2][]{title={\textbf{Protocol~\thetcbcounter:} #1},label={pro:#2},breakable,float=htbp}
\newcommand{\FF}{\mathbb{F}}
\newcommand{\RR}{\mathbb{R}}
\newcommand{\LL}{\mathcal{L}}
\newcommand{\PP}{\mathrm{Pr}}
\newcommand{\f}[2]{f^{(#1)}_{#2}}
\newcommand{\tf}[2]{{\tilde f}^{(#1)}_{#2}}
\newcommand{\tS}[1]{{\tilde S}^{(#1)}}
\renewcommand{\alph}[2]{\alpha^{(#1)}_{#2}}
\newcommand{\bfalpha}[1]{\boldsymbol{\alpha}^{(#1)}}
\newcommand{\bfa}{\boldsymbol{a}}
\newcommand{\bfb}{\boldsymbol{b}}
\newcommand{\bfx}{\boldsymbol{x}}
\newcommand{\xmu}{\bfx_{1:\mu}}
\newcommand{\lpvr}[3]{\langle \tilde\Pr_{#1}(#2,#3) \leftrightarrow \Ve_{#1}^{#2}(#3) \rangle}
\newcommand{\honestlpvr}[3]{\langle \Pr_{#1}(#2,#3) \leftrightarrow \Ve_{#1}^{#2}(#3) \rangle}
\newcommand{\US}{\mathsf{US}}
\newcommand{\negl}{\mathsf{negl}}
\renewcommand{\Pr}{\mathsf{P}}
\newcommand{\pp}{\mathsf{pp}}
\newcommand{\Ve}{\mathsf{V}}
\newcommand{\Commit}{\mathsf{Commit}}
\newcommand{\Eval}{\mathsf{Eval}}
\newcommand{\Open}{\mathsf{Open}}
\newcommand{\lEval}{\ell\text{-}\mathsf{Eval}}
\newcommand{\Acc}{\mathsf{accept}}
\newcommand{\Rej}{\mathsf{reject}}
\newcommand{\size}[1]{\left| #1 \right|}
\newcommand{\set}[1]{\left\{#1\right\}}
\renewcommand{\size}[1]{\left|#1\right|}
\newcommand{\SC}{\mathsf{DCS}}
\newcommand{\DCS}{\mathsf{DCS}}
\newcommand{\DCSF}{\mathsf{Fold}\textsf{-}\mathsf{DCS}}
\newcommand{\PC}{\mathsf{PC}}
\newcommand{\round}{\mathsf{rn}}
\newcommand{\comm}{\mathsf{cc}}
\newcommand{\query}{\mathsf{q}}
\newcommand{\rand}{\mathsf{rand}}
\newcommand{\sft}{\mathsf{t}}
\newcommand{\PoldD}{\FF[\xmu]_{d,D}}
\newcommand{\U}{\mathcal{U}}
\newcommand{\MLin}{\textsc{MultiLin}}
\newcommand{\calF}{\mathcal{F}}
\renewcommand{\leq}{\leqslant}
\renewcommand{\geq}{\geqslant}
\renewcommand{\tilde}{\widetilde}
\title{A divide-and-conquer sumcheck protocol}
\date{\today}
\begin{document}

\maketitle

\begin{abstract}
    We present a new sumcheck protocol called $\DCSF$ (Fold-Divide-and-Conquer-Sumcheck) for multivariate polynomials based on a divide-and-conquer strategy. Its round complexity and soundness error are logarithmic in the number of variables, whereas they are linear in the classical sumcheck protocol. This drastic improvement in number of rounds and soundness comes at the expense of exchanging multivariate polynomials, which can be alleviated using polynomial commitment schemes.  We first present $\DCSF$ in the PIOP model, where the prover provides oracle access to a multivariate polynomial at each round. We then replace this oracle access in practice with a multivariate polynomial commitment scheme; we illustrate this with an adapted version of the recent commitment scheme Zeromorph \cite{Zeromorph}, which allows us to replace most of the queries made by the verifier with a single batched evaluation check.
    
\end{abstract}

\begin{textabstract}
	We present a new sumcheck protocol called \textsf{Fold-DCS} (Fold-Divide-and-Conquer-Sumcheck) for multivariate polynomials based on a divide-and-conquer strategy. Its round complexity and soundness error are logarithmic in the number of variables, whereas they are linear in the classical sumcheck protocol. This drastic improvement in number of rounds and soundness comes at the expense of exchanging multivariate polynomials, which can be alleviated using polynomial commitment schemes.  We first present \textsf{Fold-DCS} in the PIOP model, where the prover provides oracle access to a multivariate polynomial at each round. We then replace this oracle access in practice with a multivariate polynomial commitment scheme; we illustrate this with an adapted version of the recent commitment scheme Zeromorph, which allows us to replace most of the queries made by the verifier with a single batched evaluation check.
\end{textabstract}

\section{Introduction}

The classical sumcheck protocol \cite{LFK92} is an interactive proof protocol used to verify the sum of the values of a given multivariate polynomial over a large domain, typically a hypercube. The protocol works by iteratively reducing a multivariate polynomial sum to a univariate case, allowing efficient verification without requiring the verifier to recompute the entire sum. At each round, the arity of the polynomial is reduced by one, meaning that there is one round per variable. It is highly efficient in terms of communication, as the prover only sends \emph{univariate} polynomials to the verifier. Keeping the amount of data sent to the verifier this low alleviates the cost (in time and space) of computing cryptographic commitments to large vector in zero-knowledge proof systems and thus makes the sumcheck protocol a core component in several zk-SNARKs. For instance, Hyrax  \cite{Hyrax18} calls for as many sumcheck invocations as the depth of the circuit, and Spartan \cite{Spartan20} needs two sumcheck invocations for products of two multilinear polynomials.

The sumcheck protocol also plays a central role in Interactive Proofs (IPs). It is the main ingredient of the GKR interactive proof for circuit evaluation \cite{GKR}. Bootle et al. \cite{BCS21} recently introduced a class of interactive protocols, called \emph{sumcheck arguments}, which turn the knowledge proofs of openings for certain commitment schemes $\textsf{CM}$ into sumcheck protocols for a function $f_\textsf{CM}$ over a domain $H$. Such compatible commitment schemes are said \emph{sumcheck-friendly}. Sumcheck arguments establish an elegant connection between the sumcheck protocol and several seemingly disparate works, such as folding techniques. This renews and reinforces the need for efficient sumcheck protocols.

\medskip

In this work, we present a new polynomial interactive oracle proof (PIOP) to check the sum of a multivariate polynomial $f$ of arity $\mu$ over a hypercube $H^\mu$ in $O(\log \mu)$ rounds. The strategy in the standard sumcheck protocol \cite{LFK92} is to reduce the problem at each round to another instance of the sumcheck protocol with a polynomial of lower arity. Here, instead of decreasing the arity by one at each round, we rely on the Divide-and-Conquer routine to turn one instance of the sumcheck into two instances of half the \textquote{size}, here half the arity. 
The first instance still aims at verifying that the claimed sum is correct, while the second one allows to check the integrity of the function used in the first one. A classical trick (see \cite{FRI,fold}) to avoid doubling the instances at each turn is to \emph{fold} them: instead of checking the sums of two polynomials $f_0$ and $f_1$, we check that a random linear combination of $f_0$ and $f_1$ has the expected sum, repeating the Divide-and-Conquer process described above.

Ultimately, the final check is a univariate sumcheck which can be performed either by the verifier herself (querying $\size{H}$ values of the last commit) or using an efficient interactive protocol, like the one in Aurora \cite{aurora} if $H$ is structured. As a result, the round complexity is $O(\log \mu)$ for a $\mu$-variate polynomial, compared to $\mu$ in the standard protocol.

Decreasing the number $r$ of rounds is  critical in the context of the Fiat-Shamir transform. For a $(2r+1)$-move interactive protocol in which the prover has a cheating probability of at most $\epsilon$, the associated Fiat-Shamir-transformed protocol admits a cheating probability of at most $(Q+1)^r \cdot \epsilon$, where $Q$ is the number of random-oracle queries. Attema et al. \cite{AFK23} showed that this exponential security loss does not only occur for contrived examples, but also for some natural protocols such as the $t$-fold parallel repetition of protocols. It is worth noting that this critical loss of security does not happen when the interactive protocol satisfies a strengthened version of soundness, called \emph{round-by-round soundness} \cite{Fiat-Shamir}.

\paragraph{Comparison with the standard sumcheck protocol} 

In both the standard and our protocol, the soundness is linear in the number of rounds. However, the soundness of $\DCSF$ depends on the total degree of the polynomial, not its individual degrees. Thanks to the exponential gain in round complexity, we thus also achieve a better soundness as long as the total degree of the polynomial is fixed and at most $\mu/\log(\mu)$ times its highest individual degree.

This significantly lower number of rounds comes at the expense of the exchange of \emph{multivariate} polynomials between the prover and the verifier, which would make the proof size and the verifier complexity explode. Our PIOP for sumcheck thus requires a polynomial commitment scheme (PCS) for practical use.  In our protocol, if the polynomials computed by the prover $\Pr$ were fully sent (without using commitment schemes), most of the verifier $\Ve$'s computational complexity would reside in evaluating multivariate polynomials sent by the prover $\Pr$. We have chosen to first present the protocol in the PIOP model (see Section \ref{sec:dcs}), in which $\Ve$ is not sent actual polynomials by $\Pr$, but instead given oracle access to each one of them, allowing $\Ve$ to query evaluations of said polynomials at any point. Then, in Section \ref{sec:dsc-pcs}, we present the protocol using a multivariate polynomial commitment scheme (PCS), in which $\Pr$ first sends commitments to the polynomials; later, $\Pr$ and $\Ve$ run an evaluation protocol in which the prover sends the values of a batch of polynomials at a given common  point and a proof of convinces the verifier of the correctness of these values.

Complexities of the standard sumcheck protocol, the $\DCSF$ in the PIOP model and its instantiated version with the commitment scheme Zeromorph \cite{Zeromorph} are gathered up in Table \ref{table:complexities}. Note that in the usual description of the standard sumcheck \cite{LFK92}, the prover is given oracle access to the original polynomial. So the prover computations consist in querying $\size{H}^{\mu}$ values and summing them, hence a prover complexity of $O(\size{H}^{\mu})$ $\FF_q$-operations (see \cite[\textsection 4.1]{thaler} for details). Handling the whole polynomial as in the PIOP-model, the prover can perform less operations (recall that $d < \size{H}$).
For fair comparison, we give the prover complexity of the standard sumcheck protocol in the latter case. A similar computation to the one of \textsection\ref{subsec:complexities-ROM} shows that the prover needs to perform at most $\mu^2 d^{\mu-1} + 2d\size{H}$ operations in $\FF_q$, which is less than $\size{H}^{\mu}$ for $\mu$ large enough. As a result, our protocol with $\log(\mu)$ rounds also decreases the prover complexity in the PIOP model.

\paragraph{Choosing a multivariate polynomial commitment scheme}

We chose to instantiate $\DCSF$ with the commitment scheme Zeromorph \cite{Zeromorph} based on KZG commitments \cite{KZG10}. While Zeromorph is not transparent, it offers the following advantages: \begin{itemize}
	\item Since it does not use a sumcheck protocol as a subroutine, its evaluation protocol has constant round complexity;
	\item The verifier complexity of its evaluation protocol is linear in the number of variables;
	\item Its evaluation protocol allows for batching and shifting
	\cite[\textsection 8]{Zeromorph}.
\end{itemize}

\begin{table}[h]
	\begin{center}
		\begin{tabular}{|l|c|c|c|}
			\hline
			& Standard & \multicolumn{2}{c|}{$\DCSF$}\\ \cline{3-4}  
			& sumcheck & PIOP model & with Zeromorph\\ \hline
			Number of rounds         & $\mu$                      & $O(\log\mu)$ & $O(\log\mu)$                         \\ \hline
			Randomness               & $\mu$  &  $\mu + \log \mu+1$& $O(\mu \log d\log \mu)$                    \\ \hline
			Query complexity         & \cellcolor{black!50}        & $2\log \mu +3$ & \cellcolor{black!50}          \\ \hline
			\# of commitments (PIOP)         & \cellcolor{black!50}        & $\log \mu +1$ & \cellcolor{black!50}          \\ \hline
			Communication complexity& $d\cdot \mu$        & \cellcolor{black!50} & $O(\mu \log d\log \mu)$   \\\hline
			Prover complexity        & $\mu^2 d^{\mu-1}\!+2d \size{H}$ & $\log(\mu) \mu d^{\mu/2}\!+2d \size{H}$                     & $O(d\log(\mu) 2^{\mu})$                       \\ \hline
			Verifier complexity      & $\mu d \log d$&    $O(\log \mu)$   & $O(\mu \log d)$                          
			\\ \hline
			\rule{0pt}{1.8em}Soundness      & $\mu\cdot\frac{d}{\size{\FF}}$       & \multicolumn{2}{c|}{$(\log\mu+1)\cdot\frac{D+1}{\size{\FF}}$}                          
			\\[1em] \hline
		\end{tabular}
	\end{center}
	\caption{Comparison of protocol $\DCSF$ (with the PCS Zeromorph \cite{Zeromorph}) with the standard sumcheck protocol for a $\mu$-variate polynomial of partial degrees at most $d$ and total degree at most $D$ over a coset $H \subset \FF$. The verifier and prover complexities are counted in terms of operations in the field $\FF$. The communication complexity measures the number of elements of $\FF$ exchanged during the protocols.}\label{table:complexities} 
\end{table}

In particular, all the queries made by the verifier in our protocol may be summed up in a single batched evaluation when instantiating $\DCSF$ with Zeromorph. Zeromorph is initially designed to commit to multilinear polynomials. We present an adapted version for multivariate polynomials with prescribed partial and total degrees.
 Since KZG and thus Zeromorph require bilinear pairings, this restricts their operation to large fields where pairing-friendly elliptic curves can
be defined.

Recent works focused on building PCS that are \emph{field-agnostic}, contrarily to aforementioned PCS based on elliptic curves over designated finite fields. Some examples of field-agnostic PCS are Brakedown \cite{brakedown}, Basefold \cite{basefold}, BrakingBase \cite{brakingbase}. Unfortunately, up to our knowledge, all field-agnostic PCS rely on the standard sumcheck protocol, so the number of rounds in the evaluation protocol equals the arity of the polynomial. Despite their efficiency with respect to time complexities, there is no point in using these PCS for $\DCSF$: this would annihilate our advantage in terms of rounds.

It is worth noting that with all currently known PCS with constant round complexity, the batched evaluation protocol between the prover and the verifier is the main bottleneck of $\DCSF$ in terms of time complexities. This raises the natural question of the possibility of recursively invoking our sumcheck protocol in the state-of-the-art field-agnostic PCS cited above, which we leave for future works. 

\paragraph{Mixing the standard and $\DCSF$ approaches}

With this new efficient sumcheck protocol $\DCSF$ in hand, one can reasonably suggest to mix both the standard protocol and $\DCSF$.
A natural idea to lower the round complexity of the standard sumcheck protocol is to send at each round a polynomial with a fixed arity $k=k(\mu)$ (which may depend on the total number $\mu $ of variables). This reduces a $\mu $-variate sumcheck to $\mu /k$ sumchecks of arity $k$, which can be merged into one sumcheck of arity $k$ using the folding technique described in \textsection \ref{subsubsec:fold}. However, both the communication and the verifier complexity are now higher due to the fact that $k$-variate polynomials are exchanged and used for sumchecks. This may, like in $\DCSF$, be alleviated by having $\Pr$ provide oracle access to the polynomials; for each of the sumchecks, $\Ve$ then needs to make $|H|^k$ queries.

For the sake of the query complexity, this $k$-variate sumcheck should again be handled using the standard sumcheck protocol, or using $\DCSF$. In the former case, the total round number is $\mu /k+k$, which is minimal when $k=\sqrt \mu$. This means that the soundness error is $O(\sqrt{\mu}d/q)$: this is much worse than $\DCSF$ in terms of round complexity as well as soundness.
In the latter case, the total number of rounds is $\mu /k+\log(k)$. The round number is minimal when $k=\mu$, as will be the soundness error; this corresponds to the case where $\DCSF$ is directly performed on the initial polynomial. However, the query complexity is $1+\log(k)$, and is minimal when $k=1$, which corresponds to the standard protocol. 

\section{Preliminaries}

\subsection{Interactive proofs and sumcheck protocols}

\emph{Interactive proofs} (IPs) were introduced by Goldwasser, Micali, and Rackoff \cite{GMR89}: in an $\round$-round interactive proof for a language $\LL$, a probabilistic polynomial-time verifier $\Ve$ exchanges $\round$ messages with an unbounded prover $\Pr$, and then accepts or rejects. The goal is that $\Ve$ accepts when the inputs belong to $\LL$, and rejects with high probability when they do not. \emph{Interactive oracle protocols} (IOP) were introduced by Ben-Sasson, Chiesa and Spooner \cite{BCS16} and differ from IPs by the way the verifier accesses the prover's messages. At each round, the verifier  sends a message to the prover which he reads in full, whereas the prover replies with a message to the verifier, which she can \emph{query} (via random access) in the given round and all later rounds. In both cases, we denote by $\langle \Pr \leftrightarrow \Ve \rangle \in \{\Acc,\Rej\}$ the output of $\Ve$ after interacting with $\Pr$.
Certain inputs of $\Ve$ can also only be given via oracle access. Traditionally, this difference is highlighted by writing $\Ve^{\,i_o} (i_f)$ where $i_o$ is the set of inputs which  $\Ve$ accesses via oracles, and $i_f$ the one she can fully read.

\begin{definition}[Perfect completeness]
 An interactive (oracle) proof for a language $\LL$ is said to be \emph{perfectly complete} if
\[ \PP \left[\honestlpvr{}{i_o}{i_f} = \Acc \bigm\vert (i_o,i_f)\in\LL\right] = 1.\]
\end{definition}

\begin{definition}[Soundness] Let $\LL=(\LL_{\rho})_{\rho\in\mathcal{P}}$ be a family of languages which depend on an element $\rho$ of some parameter space $\mathcal{P}$. An interactive (oracle) protocol for $\LL$ is said to have
\emph{soundness error} $s\colon\mathcal{P}\to\RR$ if for any parameter $\rho\in\mathcal{P}$, any \emph{unbounded} malicious prover $\tilde\Pr$, and any inputs $(i_0,i_f)$, 
    \[\PP \left[ \lpvr{}{i_o}{i_f} = \Acc \bigm\vert (i_o,i_f)\not\in \LL_{\rho}\right] \leq s(\rho).\]

\end{definition}

\begin{definition} Let $\mu,d,D$ be nonnegative integers. Let $\FF$ be a finite field. We denote by $\PoldD$ the $\FF$-vector space of $\mu$-variate polynomials coefficients in $\FF$ with individual degrees at most $d$ and total degree at most $D$, \textit{i.e.} generated by the set of monomials
\[\set{ x_1^{i_1}\dots x_\mu^{i_\mu} \Bigm\vert \sum_{j=1}^\mu i_j \leq D  \text{ and }  \forall j\in\set{1,\dots,\mu}, \:  i_j \leq d}.\]
\end{definition}

\begin{definition} Let $\mu,d,D$ be nonnegative integers. Let $\FF$ be a finite field, and $H$ be a subset of $\mathbb{F}$. 
A \emph{sumcheck protocol} for $\mu$-variate polynomials with coefficients in $\FF$, partial degrees $\leq d$ and total degree $\leq D$  for the summation set $H$ is an interactive (oracle) protocol for the language \[ \LL_{\mu,d,D,\FF,H}=\left\{ (f,S) \in \PoldD \times \FF \Bigm\vert \sum_{\bfa\in H^\mu} f(\bfa)=S\right\}. \]
\end{definition}

\begin{remark}
Using the low-degree extension \cite[Proposition~4.1]{BFLG91}, we can assume w.l.o.g. that $d \leq \size{H}-1$. Moreover, the degrees and the arity satisfy $D \leq \mu d$.
\end{remark}

We are going to study sumcheck protocols in the Polynomial IOP model, as introduced in {\cite[Definition~5]{BFS20}}. We give here a slightly modified definition that is more suitable for the sumcheck in terms of degree bounds and arity.

\begin{definition}[$(\mu,d,D)$-Polynomial IOP]
Let $\mathcal{L}$ be a language, $\mathbb{F}$ some finite field, and $\mu, d,\, D \in \mathbb{N}$. A \emph{$(\mu,d,D)$-Polynomial IOP} for $\mathcal{L}$ with partial degree bound $d$ and total degree bound $D$ over $\FF$ is a pair of interactive machines $(\Pr, \Ve)$, satisfying the following description.
\begin{itemize}
    \item $(\Pr, \Ve)$ is an interactive proof for $\mathcal{L}$;
    \item $\Pr$ sends polynomials $f_i(\bfx) \in \PoldD$ to $\Ve$;
    \item $\Ve$ is an oracle machine with access to a list of oracles, which contains one oracle for each polynomial it has received from the prover.
    \item When an oracle associated with a polynomial $f_i$ is queried on a point $\boldsymbol{z}_j \in \mathbb{F}^\mu$, the oracle responds with the value $f_i\left(\boldsymbol{z}_j\right)$.
\end{itemize}
\end{definition}

The computation of the soundness error of sumcheck protocols relies on the well-known Schwartz-Zippel lemma \cite{DL78,Zippel79,Schwartz80}.

\begin{lemma}[Schwartz-Zippel]\label{lem:SZ} Let $f\in \FF[\bfx]$ be a nonzero $\mu$-variate polynomial of total degree $D$. For uniformly picked $\bfa\in H^\mu$, \[\PP_{\bfa}[f(\bfa)=0]\leqslant \frac{D}{|H|}.\]

\end{lemma}

\subsection{The standard sumcheck protocol}

The standard sumcheck protocol \cite{LFK92} is described in Protocol \ref{pro:univariate}. In this protocol, the verifier $\Ve$ checks at each round the sum of a univariate polynomial sent by the prover $\Pr$. In the end, $\Ve$ queries one evaluation of the initial function to ensure consistency. The univariate sumchecks at each round are usually presented as being carried out by hand; however, $\Ve$ may also run a univariate sumcheck protocol to do this.

\begingroup
\hypersetup{citecolor=white!70!green}

\begin{protocol}[The standard sumcheck protocol {\cite{LFK92}}]{univariate}
\textbf{Parameters:} integer $\mu$, field $\FF$, and $H \subseteq \FF$.\\
\textbf{Inputs:} $f\in \PoldD$ and $S \in \FF$.
\newline

\centering

\def\nodedistance{1.1cm}
\def\ratio{0.6}
\begin{tikzpicture}[node distance=\nodedistance, 
    smallnode/.style={minimum width=3cm,align=center},
    Pnode/.style={minimum width=7cm,text width=7.2cm,xshift=1cm,align=flush left},
    Vnode/.style={minimum width=5cm,text width=5cm,align=flush right},
    arrows={->}]

    \node[smallnode] (P) at (0, 0) {$P(f, S)$};
    \node[smallnode] (V) at (6,0) {$V^f(S)$};
    \node (center) at ($(V)!0.5!(P)$) {};


    \node[Pnode, below of = P,yshift=-\ratio*\nodedistance] (P1) {Compute \\
$f_1(x_1)=\!\displaystyle\sum_{\mathclap{\boldsymbol{a}\in H^{\mu-1}}}f(x_1,a_2,\dots,a_{\mu})$ };

    \node[below of = P1] (P2) at (P.south|-P1) {};

    \draw[->] (P2) -- (V.south|-P2) node (V1) {} node [above,midway] {$f_1$};


\node[Vnode, below of = V1] (V2) {$\sum_{a \in H} f_1(a)\overset{?}{=}S$};

    \node[below of = V2,node distance=\nodedistance] (V3) {};
    
    \draw[->] (V3) -- (P2.south|-V3) node (P3) {} node [above,midway] {$\alpha_1\overset{\$}{\leftarrow} \FF$};

\node[Pnode, below of = P3] (P4) {Compute \\
$f_2(x_2)=\!\displaystyle\sum_{\mathclap{\boldsymbol{a}\in H^{\mu-2}}}f(\alpha_1,x_2,a_3,\dots,a_{\mu})$ };

    \node[below of = P4] (P5) at (P.south|-P4) {};
    
    \draw[->] (P5) -- (V.south|-P5) node (V5) {} node [above,midway] {$f_2$};
    
     \node[Vnode, below of = V5,node distance=\ratio*\nodedistance] (V5bis) {$\sum_{a \in H} f_2(a)\overset{?}{=}f_1(\alpha_1)$};

     \node[yshift=-\ratio*\nodedistance] (dots) at (V5bis-|center) {$\vdots$};
     
     \node[yshift=-1.3*\nodedistance] (V6) at (V.south|-dots) {};
     
     \draw[->] (V6) -- (P.south|-V6) node (P7) {} node [above,midway] {$\alpha_{i-1}\overset{\$}{\leftarrow} \FF$};

    \node[Pnode, below of = P7] (P8) {Compute \\
$f_i(x_i)=\displaystyle\sum_{\mathclap{\boldsymbol{a}\in H^{\mu-i}}}f(\alpha_1,\dots,\alpha_{i-1},x_i,a_{i+1},\dots,a_{\mu})$ };

    \node[below of = P8] (P9) at (P.south|-P8) {};

\draw[->]  (P9) {} -- (V.south|-P9) node (V9) {} node [above,midway] {$f_i$};

\node[Vnode, below of = V9,node distance=\ratio*\nodedistance] (V9bis) {$\sum_{a \in H} f_i(a)\overset{?}{=}f_{i-1}(\alpha_{i-1})$};

\node[yshift=-\nodedistance] (dots2) at ($(V9)!0.5!(P9)$) {$\vdots$};
     
    \node[Pnode, below of = P9,yshift=-\nodedistance] (P10) {Compute \\
$f_{\mu}(x_{\mu})=\!f(\alpha_1,\dots,\alpha_{\mu-1},x_{\mu})$ };

\node[below of = P10] (P11) at (P.south|-P10) {};

\draw[->] (P11) {} -- (V.south|-P11) node (V11) {} node [above,midway] {$f_{\mu}$};
    \node[Vnode, below of = V11,yshift=-0.5*\ratio*\nodedistance] (V12) {$\sum_{a \in H} f_{\mu}(a)\overset{?}{=}f_{\mu}(\alpha_{\mu-1})$\\  $f_{\mu}(\alpha_{\mu})\overset{?}{=}f(\alpha_1,\dots,\alpha_{\mu})$\\ with $\alpha_{\mu} \overset{\$}{\leftarrow} \FF$};

\end{tikzpicture}
\end{protocol}
\endgroup

Its soundness is usually computed by considering a union over all rounds of the protocol, resulting in an upper bound of $\mu d/\size{\FF}$. This result can be refined as follows.

\begin{proposition}[Soundness]
Let $p$ be the soundness error of the univariate sumcheck protocol used at each round to check if the sum of $f_i$ over $H$ equals $f_{i-1}(\alpha_{i-1})$. The number
\[ s_{d,\FF,p}(\mu)\coloneqq\PP_{\alpha_1,\dots,\alpha_{\mu}}\left[\lpvr{\mu,d,\FF,H}{f}{S}=\Acc\Bigm \vert \sum_{\bfa\in H^\mu}f(\bfa)\neq S\right]\]
satisfies
\[ s_{d,\FF,p}(\mu)\leqslant 1-\left(1-\frac{d}{\size{\FF}}\right)^{\mu-1}\left(1-\max\left( p,\frac{d}{\size{\FF}}\right)\right).\]
When $p\leq d/\size{\FF}\leq 1$, this is bounded from above by $\mu d/\size{\FF}$.

\begin{proof} We are going to provide a recurrence relation bounding $s_{d,\FF,p}(\mu)$ in terms of $s_{d,\FF,p}(\mu-1)$. We assume that the sum of $f$ over $H^\mu$ is different from $S$. First, suppose $\mu\geq2$. During the first round of the protocol, $\tilde\Pr$ sends a function $\tilde f_0$ which may or may not be equal to the function $f_0$ defined in the protocol.
\begin{itemize} 
\item If $\tilde f_1=f_1$, then the sum of $\tilde f_1$ over $H^\mu$ is not $S$, hence the univariate sumcheck of the first round passes with probability at most $p$. 
\item If $\tilde f_1\neq f_1$, then
\begin{itemize}
    \item either $\tilde f_1(\alpha_1)=f_1(\alpha_1)$, which happens with probability $u\leq d/\size{\FF}$ since $\deg(f_1)\leq d$,
    \item or $\tilde f_1(\alpha_1)\neq f_1(\alpha_1)$, which happens with probability $1-u$. In this case, $\Ve$ accepts with probability at most $s_{d,\FF,p}(\mu-1)$, since the remainder of the protocol is just a sumcheck for the $(\mu-1)$-variate function $f(\alpha_1,x_2,\dots,x_\mu)$ with an incorrect claimed sum $\tilde f_1(\alpha_1)$. 
\end{itemize}
Hence when $\tilde f_1\neq f_1$, the probability that $\Ve$ accepts is smaller than or equal to ${u\cdot 1+(1-u)\cdot s_{d,\FF,p}(\mu-1)}$. Since $s_{d,\FF,p}(\mu-1)\leq 1$ and $u\leq d/\size{\FF}$, this is bounded from above by \[ d/\size{\FF}+(1-d/\size{\FF})s_{d,\FF,p}(\mu-1).\]
\end{itemize}
Taking both of these cases into account, we obtain \[ s_{d,\FF,p}(\mu)\leqslant \max\left(p,\frac{d}{\size{\FF}}+\left(1-\frac{d}{\size{\FF}}\right)s_{d,\FF,p}(\mu-1)\right).\]
When $\mu=1$, we may consider the same two cases; the probability of the second case is just $d/\size{\FF}$ since $\Ve$ never accepts if $\tilde f_1(\alpha_1)\neq f_1(\alpha_1)$. Hence \[ s_{d,\FF,p}(1)\leqslant \max\left( p,\frac{d}{\size{\FF}}\right).\] 
Consider the sequence $(t_\mu)_{\mu\geqslant 1}$ defined by \[t_1=\max(p,d/\size{\FF})\] and for all $\mu\geqslant 1$, $t_{\mu+1}=\max(p,d/\size{\FF}+(1-d/\size{\FF})t_\mu)$. Then $s_{d,\FF,p}(\mu)\leqslant t_\mu$ for all $\mu\geqslant 1$.
Using the fact that for all $x\in [0,1]$, $d/\size{\FF}+(1-d/\size{\FF})x\geqslant x$, one can easily show that:
\begin{itemize}
\item If $p\leq d/\size{\FF}$ then $t_1=d/\size{\FF}$, and for all $\mu\geq 1$, \[ t_\mu= 1-\left( 1-\frac{d}{\size{\FF}}\right)^\mu.\]
\item If $p>d/\size{\FF}$ then $t_1=p$ and for all $\mu\geqslant 1$,\[ t_\mu=1-\left( 1-\frac{d}{\size{\FF}}\right)^{\mu-1}(1-p).\]
\end{itemize}
The result follows immediately.
\end{proof}
\end{proposition}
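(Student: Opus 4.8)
The plan is to prove a recursive upper bound on $s_{d,\FF,p}(\mu)$ in terms of $s_{d,\FF,p}(\mu-1)$, and then to solve the recursion in closed form. Throughout I fix a malicious prover $\tilde\Pr$ and work in the soundness regime, i.e.\ I assume $\sum_{\bfa\in H^\mu}f(\bfa)\neq S$. The only external ingredient needed is the (univariate specialisation of the) Schwartz--Zippel Lemma~\ref{lem:SZ}.

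First I would analyse the very first round, where $\tilde\Pr$ sends a degree-$\leq d$ univariate polynomial $\tilde f_1$ to be compared with the honest $f_1(x_1)=\sum_{\bfa\in H^{\mu-1}}f(x_1,a_2,\dots,a_\mu)$, and split into two cases. If $\tilde f_1=f_1$, then $\sum_{a\in H}\tilde f_1(a)=\sum_{\bfa\in H^\mu}f(\bfa)\neq S$, so the univariate sumcheck of round $1$ is run on a false statement and accepts with probability at most $p$. If $\tilde f_1\neq f_1$, then $\tilde f_1-f_1$ is a nonzero polynomial of degree $\leq d$, so the verifier's challenge $\alpha_1$ satisfies $\tilde f_1(\alpha_1)=f_1(\alpha_1)$ with some probability $u\leq d/\size{\FF}$; if it does, I bound the acceptance probability crudely by $1$, and if it does not ($\tilde f_1(\alpha_1)\neq f_1(\alpha_1)$, probability $1-u$), then the rest of the interaction is exactly a sumcheck execution for the $(\mu-1)$-variate polynomial $f(\alpha_1,x_2,\dots,x_\mu)$ (still of partial degrees $\leq d$) against the incorrect claimed sum $\tilde f_1(\alpha_1)$, hence accepts with probability at most $s_{d,\FF,p}(\mu-1)$. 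Combining, the case $\tilde f_1\neq f_1$ contributes at most $u+(1-u)\,s_{d,\FF,p}(\mu-1)\leq \frac{d}{\size{\FF}}+\bigl(1-\frac{d}{\size{\FF}}\bigr)s_{d,\FF,p}(\mu-1)$ (using $u\leq d/\size{\FF}$ and $s_{d,\FF,p}(\mu-1)\leq 1$), so that for $\mu\geq 2$
\[ s_{d,\FF,p}(\mu)\leq \max\!\left(p,\ \frac{d}{\size{\FF}}+\Bigl(1-\frac{d}{\size{\FF}}\Bigr)s_{d,\FF,p}(\mu-1)\right), \]
while for $\mu=1$ the same argument with no residual sumcheck (acceptance probability $0$ when $\tilde f_1(\alpha_1)\neq f_1(\alpha_1)$) gives $s_{d,\FF,p}(1)\leq \max(p,d/\size{\FF})$.

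Next I would solve this recursion by comparison with the scalar sequence $t_1=\max(p,d/\size{\FF})$ and $t_{\mu+1}=\max(p,\,d/\size{\FF}+(1-d/\size{\FF})t_\mu)$, so that $s_{d,\FF,p}(\mu)\leq t_\mu$ by an immediate induction. The elementary point is that the affine map $\varphi(x)=d/\size{\FF}+(1-d/\size{\FF})x$ is nondecreasing, fixes $1$, and satisfies $\varphi(x)\geq x$ on $[0,1]$; since $t_1\geq p$ and $t_1\geq d/\size{\FF}$, one gets $\varphi(t_1)\geq t_1\geq p$, so the outer $\max$ is never active after the first step and $t_{\mu+1}=\varphi(t_\mu)$ for all $\mu\geq 1$. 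Writing $t_{\mu+1}-1=(1-d/\size{\FF})(t_\mu-1)$ then yields $t_\mu=1-(1-d/\size{\FF})^{\mu-1}(1-t_1)=1-(1-d/\size{\FF})^{\mu-1}\bigl(1-\max(p,d/\size{\FF})\bigr)$, which is exactly the claimed bound; the two bullet cases ($p\leq d/\size{\FF}$ versus $p>d/\size{\FF}$) are just the two values of $t_1$. Finally, when $p\leq d/\size{\FF}\leq 1$ one has $t_1=d/\size{\FF}$ and $t_\mu=1-(1-d/\size{\FF})^\mu\leq \mu d/\size{\FF}$ by Bernoulli's inequality.

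The mathematical content here is light; the two places that need genuine care are (i) the claim that, conditioned on $\tilde f_1(\alpha_1)\neq f_1(\alpha_1)$, the tail of the transcript is a bona fide sumcheck instance to which the \emph{definition} of $s_{d,\FF,p}(\mu-1)$ applies verbatim (one must check that the restricted polynomial $f(\alpha_1,x_2,\dots,x_\mu)$ lies in the right degree class and that the prover's remaining strategy is an arbitrary malicious strategy against that instance), and (ii) the bookkeeping of the recursion, i.e.\ verifying that the outer $\max$ with $p$ collapses into the single front term $\max(p,d/\size{\FF})$ and that the closed form for $t_\mu$ is correct. Neither is deep, but both are easy to get subtly wrong, so I would write them out explicitly.
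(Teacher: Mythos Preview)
Your proposal is correct and follows essentially the same approach as the paper's proof: the same first-round case split, the same recursion $s_{d,\FF,p}(\mu)\leq\max\bigl(p,\,d/\size{\FF}+(1-d/\size{\FF})s_{d,\FF,p}(\mu-1)\bigr)$ with base case $\max(p,d/\size{\FF})$, and the same comparison sequence $t_\mu$ solved in closed form. Your treatment of why the outer $\max$ with $p$ collapses after the first step is slightly more explicit than the paper's (which merely says ``one can easily show''), but the argument is otherwise identical.
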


\section{A sumcheck protocol with logarithmic round complexity}
\label{sec:dcs}

Consider a finite field $\FF$, and a subset $H$ of $\FF$.
In this section, we describe a sumcheck protocol for polynomials in $\mu=2^m$ variables. We still denote by $\PoldD$ the space of $\mu$-variate polynomials with coefficients in $\FF$, of partial degree in each variable bounded by $d$ and total degree bounded by $D\leq d^\mu$. Let $f\in\PoldD$, and $S\in \FF$ be the claimed value of the sum of all evaluations of $f$ over $H^{\mu}$. We first describe a somewhat crude but easily understandable version of the protocol. After that, we present the genuine protocol.

\subsection{A simplified version of the protocol}

The simple protocol $\DCS_\mu$ described below showcases the core idea of our construction. It takes as inputs a $\mu$-variate polynomial $f \in \PoldD$ and a value $S \in \FF$, and it recursively checks the assertion \[\sum_{\bfa\in H^{\mu}}f(\bfa)=S.\]
We will denote by $\DCS_{\mu}[f,S]$ the execution of the protocol $\DCS_{\mu}$ on the inputs $f,S$, which will be refined later in order to achieve a better communication complexity.

\paragraph{Base case} For $\mu=1$ (\textit{i.e.} $m=0$), the polynomial $f$ is univariate. In that case, $\DCS_{1}[f,S]$ is just the verifier checking by hand that $\sum_{a\in H}f(a)=S$. If $H$ has a particular structure, this may be replaced with another univariate sumcheck protocol (see \textsection \ref{sssec:total_complexities}).

\paragraph{General case} For $\mu \geq 2$, $\SC_{\mu}[f,S]$ recursively calls $\SC_{\mu/2}$ as described below.

\begin{protocol}[$\SC_{\mu}$]{}

\textbf{Parameters:} field $\FF$, arity $\mu=2^m$, degrees $d$ and $D$ and $H \subseteq \FF$.\\
\textbf{Inputs:} $f\in \PoldD$ and $S \in \FF$.

\centering

\def\nodedistance{1cm}
\begin{tikzpicture}[node distance=\nodedistance and \nodedistance, 
    roundnode/.style={rectangle, draw=black, thick, minimum height=2em, minimum width=5em, align=center},
    fnode/.style={rectangle, draw=black, minimum height=2em, minimum width=2em, align=center},
    smallnode/.style={rectangle, draw=black, thick, minimum height=2em, minimum width=6em, align=center},
    textnode/.style={align=center},
   every text node part/.style={align=center}]

    \node[textnode] (P) at (0, 0) {$\Pr_{\FF,H}(f, S)$};
    \node[textnode] (V) at (6,0) {$\Ve_{\FF,H}^f(S)$};

    \node[textnode, below of = P,yshift=-0.3cm] (P1) {Compute for $\bfx=\bfx_{1:\mu/2}$\\[0.5em]
$f_0(\bfx)=\!\displaystyle\sum_{\bfa\in H^{\mu/2}}f(\bfx,\bfa)$ };

    \node[below of = P1] (P2){};

    \draw[->] (P2) -- (V.south|-P2) node (V1) {} node [above,midway] {$f_0$};


\node[textnode, below of = V, yshift=-1.5cm] (Vcheck) {};

\node[textnode, below of = V1] (V3) {};

 \draw[->] (V3) -- (P.south|-V3) node (P3) {} node [above,midway] {$\boldsymbol{\alpha}\overset{\$}{\leftarrow} \FF^{\mu/2}$};
 

\end{tikzpicture}

\raggedright

Both set
\begin{itemize}
    \item $f_1(\bfx)=f(\boldsymbol{\alpha},\bfx)$ (which can be accessed by $\Ve$ as a virtual oracle when needed),
    \item $S_1=f_0(\boldsymbol{\alpha})$ (which can be requested by $\Ve$ when needed),
\end{itemize}
and then perform in parallel
$\SC_{\mu/2}[f_0,S]$ and $\SC_{\mu/2}[f_1,S_1]$.

\end{protocol}

\paragraph{A few observations} At each round, the number of parallel executions of the protocol doubles, but the number of variables of the functions involved is halved. So after $i$ rounds of $\SC_\mu$, there are $2^i$ parallel instances of $\SC_{\mu/2^i}$, which is a sumcheck protocol for $2^{m-i}$-variate polynomials. Thus, protocol $\SC_{\mu}$ has $\log_2(\mu)$ rounds, and ends with $\mu$ univariate sumchecks. 
In order to reduce the randomness and communication complexity, the verifier may use the same randomness $\boldsymbol{\alpha}$ for every parallel execution of the protocol.

The relations between the different functions appearing in a full execution of $\SC_\mu$ can be represented by the tree in Figure \ref{fig:tree}. The solid edges lead to functions which are computed and sent by the prover, while the dashed edges lead to functions which are implicitly defined during the protocol but not actually computed, and on which the prover has no influence. 

\begin{figure}[ht]
\centering
\includegraphics{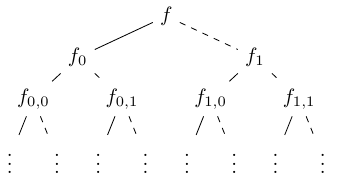}
\caption{The tree of functions involved in $\DCS_\mu$ for $\mu\in \set{8,4,2}$. The children with dashed line from their parents are not computed by $\Pr$ and are dealt as virtual oracles in the protocol.}\label{fig:tree}
\end{figure}

Let us now provide an intuitive explanation of the soundness of $\SC_\mu$, as well as an example.

\paragraph{Intuition behind the soundness of $\SC_\mu$} In the standard sumcheck protocol, the verifier checks one univariate sum at each round, which ties the sums of the functions sent by the prover to the claimed sum of the function $f$. With only these checks however, the prover could send any functions which have the right sum. This is why the verifier performs one final evaluation check which ties the functions sent by the prover to the function $f$ itself. In our protocol, these goals are achieved in a different way: the sum of the function $f_0$ sent by the prover is that of $f$, while the sum of $f_1$ (a function which is not sent by the prover, but computed directly from $f$) ties $f_0$ to $f$. The soundness error of $\SC_{\mu}$ is computed in a similar way to that of the classical sumcheck protocol: at every round, there is a probability $D/\size{\FF}$ that the function sent by the prover accidentally has the same evaluation as the function required by the protocol. The total soundness is $O(\log(\mu)D/\size{\FF})$. A precise proof of this will be given later for the refined protocol $\DCSF$. The following example illustrates the soundness in a simple case.

\begin{example} Consider the function $f(x,y)=x+y\in \FF_3[x,y]$, and the set $H=\{ 0,1\}\subset\FF_3$. We have \[ \sum_{a,b\in H}f(a,b)= 1.\]
Consider a claimed sum $S=0\neq 1$. The protocol $\SC_1[f,S]$ asks the prover $\Pr$ to send one linear function $\tilde f_{0}(x)=rx+t$ with $r, \, t \in \FF_3$. Let us find the couples $(r,t) \in (\FF_3)^2$ which maximize the probability that $\Ve$ accepts. The verifier picks $\alpha\in\FF_3$ and checks that
\[
\begin{cases}
\tilde{f}_{0}(0)+\tilde{f}_{0}(1)&=S \\
f(\alpha,0)+f(\alpha,1)&= \tilde{f}_{0}(\alpha)   
\end{cases}
\]
These two verifications amount to the following linear system in the variables $r,t$ over $\FF_3$.
\[
\begin{cases}
r+2t &= S \\
\alpha+\alpha+1 &= r\alpha+t 
\end{cases} \: \Longleftrightarrow \: \begin{cases}
r-t &= S \\
r\alpha+t &= -\alpha+1
\end{cases}
\]
which since $S=0$, is equivalent to the following 
\[
\begin{cases}
r &= t \\
(1+\alpha)t &= 1-\alpha
\end{cases}
\]
If $\alpha=-1$, this system has no solution, the second equation being ``$0=2$''. If $\alpha=1$, the only solution is $(r,t)=(0,0)$. If $\alpha=0$, the only solution is $(r,t)=(1,1)$. Hence the best possible strategy for the prover $\Pr$ is to pick $t\in \{0,1\}$ and send $\tf{1}{0}=tx+t$. In this case, the verifier $\Ve$ accepts if and only if $\alpha=1-t$. So the probability of $\Ve$ accepting is $1/3$ when $\alpha$ is uniformly random in $\FF_3$.
\end{example}

\subsection{The protocol $\DCSF$}

 Let us set the notations for this section: $f\in \PoldD$ is the tested function, $H\subset \FF$ is the evaluation set, and $S\in\FF$ is the claimed sum. We describe $\DCSF$ in Protocol~\ref{pro:Fold-DCS}.

\subsubsection{Folding for better complexity}\label{subsubsec:fold}

 One of the drawbacks of both the standard and our sumcheck protocol $\DCS$ is the fact that the verifier needs to perform as many univariate sumchecks as there are variables. The protocol $\DCS$ may be improved in order to require the verifier to perform only a single univariate sumcheck $\US_{d,H}$ of a degree-$d$ polynomial over $H$ at the end. This is done using a folding technique. Each step of protocol $\DCS$ consists in splitting one $2^m$-variate sumcheck into two $2^{m-1}$-variate sumchecks; replacing these two sumchecks with a linear combination of the two allows to keep just one function at each step of the protocol (see Figure \ref{fig:fold}).
 
\begin{figure}[!htp]
\centering
\def\leveldistance{1cm}

\begin{tikzpicture}[level distance=\leveldistance,
  sibling distance=3cm,
    norm/.style={edge from parent/.style={black,thin,solid,draw}},
  virt/.style={edge from parent/.style={black,thin,dashed,draw}},
  sum/.style={edge from parent/.style={black,thin,solid,draw,<-}}
	]
  \node (f) {$f=f_0^{(0)}$}
    child {node {$f_{0}^{(1)}$}
    }
    child[virt] {node {$f_{1}^{(1)}$}
    };
    
  \node[below of=f, node distance=2*\leveldistance] (f1) {$f^{(1)}=z^{(1)}f_0^{(1)}+f_1^{(1)}$} [grow'=up]
    child[sum] {node at (f-1.south){}}
    child[sum] {node at (f-2.south){}}
    ;
    
        \node (f1-bis) at (f1) {\phantom{$f^{(1)}=z^{(1)}f_0^{(1)}+f_1^{(1)}$}}
    child {node {$f^{(2)}_{0}$}
    }
    child[virt] {node {$f^{(1)}_{2}$}
    };
    
      \node[below of=f1, node distance=2*\leveldistance] (f2) {$f^{(2)}=z^{(2)}f^{(2)}_0+f^{(2)}_1$} [grow'=up]
    child[sum] {node at (f1-bis-1.south){}} 
    child[sum] {node at (f1-bis-2.south){}}    ;
    
\end{tikzpicture}
\caption{Tree of functions involved in the first two rounds of the protocol $\DCSF$.}\label{fig:fold}
\end{figure}
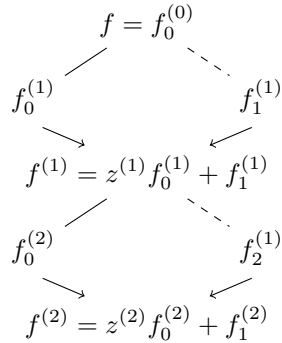

\begin{remark} This folding technique slightly affects the soundness of our protocol compared to the simplified version presented in the previous section. Indeed, even if the function $\tf{1}{0}$ sent by the prover either does not have the claimed sum or does not have the right evaluation at the random point chosen by the verifier, the random linear combination $f^{(1)}$ might still have the correct sum. This happens with probability $1/\size{\FF}$. We will see in the proof of Proposition \ref{prop:soundness-fold} that, at each round, this quantity is added to the probability that the resulting function has the claimed sum. This roughly implies adding $\log(\mu)/\size{\FF}$ to the overall soundness error of the protocol.
\end{remark}

\begin{protocol}[$\DCSF$ between $\Pr = \Pr_{\mu,\FF,H}(f,S)$ and $\Ve=\Ve_{\mu,\FF,H}^f(S)$]{Fold-DCS}
\textbf{Parameters:} field $\FF$, arity $\mu=2^m$ with $m\geq 1$, degrees $d$ and $D$ and $H \subseteq \FF$.\\
\textbf{Inputs:} $f\in \PoldD$ and $S \in \FF$.

\textbf{Commit phase:}

Initialisation: $f^{(0)}=f$ and $S^{(0)}=S$.

\begin{enumerate}[leftmargin=2em]
	\item 
for $i \in \set{1,\dots,m}$:
	\begin{enumerate}
		\item $\Pr$ computes $f^{(i)}_0=\!\displaystyle\sum_{\boldsymbol{a}\in H^{2^{m-i}}}f^{(i-1)}(\:\cdot \:,\boldsymbol{a})$.
		\item $\Pr$ gives oracle access to the $2^{m-i}$-variate polynomial $f^{(i)}_0$.
		\item $\Ve$ picks ${\bfalpha{i}}\overset{\$}{\leftarrow} \FF^{[m-i]}$ and $z^{(i)}\overset{\$}{\leftarrow}\FF$ and sends them to $\Pr$.
		\item Set the polynomials $f^{(i)}_1=f^{(i-1)}(\bfalpha{i},\:\cdot\:)$ and $f^{(i)}=z^{(i)}f^{(i)}_0+f^{(i)}_1$,\\ and the value $S^{(i)}=z^{(i)}S^{(i-1)}+f^{(i)}_0(\bfalpha{i})$.
\end{enumerate}
\item $\Pr$ gives oracle access to the univariate polynomial $f^{(m)}$.
\end{enumerate}

\textbf{Query phase:}
\begin{enumerate}[leftmargin=2em]
	
\item\label{it:sum} $\Ve$ computes $S^{(m)}$ by
\begin{itemize}
	\item querying $f_0^{(j)}(\bfalpha{j})$ for $j \in \set{1,\dots,m}$,
	\item using the formula $S^{(m)}=\prod_{j=1}^m z^{(j)} S + \sum_{j=1}^m \left( \prod_{\ell=j+1}^m z^{(\ell)} \right) f_0^{(j)}(\bfalpha{j})$.
\end{itemize}
\item\label{it:consistency} $\Ve$ checks the consistency of $f^{(m)}$ by
\begin{itemize}
	\item picking $\beta \overset{\$}{\leftarrow}\FF$,
	\item querying $f^{(m)}(\beta)$, $f(\bfalpha{1}\!,\dots,\bfalpha{m}\!,\beta)$ and $f_0^{(j)} (\bfalpha{j+1}\!,\dots,\bfalpha{m}\!,\beta)$ for ${j \in \set{1,\dots,m}}$, and
	\item verifying $f^{(m)}(\beta)= f(\bfalpha{1}\!\!,\dots,\bfalpha{m}\!,\beta)+ \sum_{j=1}^{m} z^{(j)} f_0^{(j)} (\bfalpha{j+1}\!,\dots,\bfalpha{m}\!,\beta)$.
\end{itemize}

	\item\label{it:US} $\Ve$ checks $\sum_{a\in H}f^{(m)}(a)\overset{?}{=}S^{(m)}$ via the univariate sumcheck $\US_{d,H}(f^{(m)},S^{(m)})$.
\end{enumerate}

\end{protocol}

\subsubsection{Completeness and soundness}

In this section, we prove that our protocol $\DCSF$ is perfectly complete, and that is soundness error is logarithmic in the number of variables. We recall the notations: $f$ is a $\mu=2^m$-variate polynomial with coefficients in a field $\FF$. For $i\in \{0,\dots, m\}$, we set $\mu_i=2^{m-i}$.
The subset of $\FF$ over which the sums are computed is denoted by $H$.

\begin{proposition}[Completeness]\label{prop:completeness-fold} We suppose that the univariate sumcheck protocol used at the last round of $\DCSF$ is perfectly complete. If $\sum_{\boldsymbol{a}\in H^\mu}f(\boldsymbol{a})=S$ then, given an honest prover $\Pr$,
\[ \PP_{\boldsymbol{\alpha^{(1)}},\dots,\boldsymbol{\alpha^{(m)}}}\left[ 
\honestlpvr{\mu,d,\FF,H}{f}{S}=\Acc \right]=1.\]
\begin{proof} We prove the result by induction on $m=\log_2(\mu)$. The base case $m=0$ is true, since we suppose that the univariate sumcheck protocol is perfectly complete. For $m>0$, it is enough to prove that for every $i \in \set{0,\dots,m-1}$, if the sum of $f^{(i)}$ over $H^{\mu_i}$ is $S^{(i)}$, then the sum of $f^{(i+1)}$ over $H^{\mu_i/2}$ is $S^{(i+1)}$. We have 
\begin{align*}
\sum_{\bfa\in H^{\mu_i/2}}f^{(i+1)}(\bfa) &= z^{(i+1)}\sum_{\bfa\in H^{\mu_i/2}}f_0^{(i+1)}(\bfa)+\sum_{\bfa\in H^{\mu_i/2}}f^{(i+1)}_1(\bfa) \\
    &= z^{(i+1)}\sum_{\bfa\in H^{\mu_i/2}}\sum_{\bfb\in H^{\mu_i/2}}f^{(i)}(\bfa,\bfb)+\sum_{\bfa\in H^{\mu_i/2}}f^{(i)}(\boldsymbol{\alpha},\bfa) \\
    &=z^{(i+1)}\sum_{\bfa\in H^{\mu_i}}f^{(i)}(\bfa)+\sum_{\bfa\in H^{\mu_i/2}}f^{(i)}(\boldsymbol{\alpha},\bfa) \\
    &= z^{(i+1)}S^{(i)}+f^{(i+1)}_0(\boldsymbol{\alpha})\\
    &= S^{(i+1)}.
\end{align*}
\end{proof}
\end{proposition}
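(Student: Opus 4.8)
The plan is to induct on $m=\log_2\mu$, but to keep the inductive part minimal: only one statement really needs induction — a ``sum invariant'' running down the tower $f^{(0)},\dots,f^{(m)}$ — while the three verifications of the query phase are deterministic identities that hold verbatim for the honest prover. \emph{Step 1 (sum invariant).} Writing $\mu_i=2^{m-i}$, I would prove that for every $i\in\{0,\dots,m-1\}$, $\sum_{\bfa\in H^{\mu_i}}f^{(i)}(\bfa)=S^{(i)}$ forces $\sum_{\bfa\in H^{\mu_{i+1}}}f^{(i+1)}(\bfa)=S^{(i+1)}$. This is exactly the computation dictated by the protocol: expand $f^{(i+1)}=z^{(i+1)}f^{(i+1)}_0+f^{(i+1)}_1$ by linearity of the sum over $H^{\mu_{i+1}}$, use that $\sum_{\bfa\in H^{\mu_{i+1}}}f^{(i+1)}_0(\bfa)=\sum_{\bfa\in H^{\mu_i}}f^{(i)}(\bfa)$ because $f^{(i+1)}_0$ is by construction the partial sum of $f^{(i)}$, use $\sum_{\bfa\in H^{\mu_{i+1}}}f^{(i+1)}_1(\bfa)=f^{(i+1)}_0(\bfalpha{i+1})$, and match against $S^{(i+1)}=z^{(i+1)}S^{(i)}+f^{(i+1)}_0(\bfalpha{i+1})$. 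The proposition's hypothesis is precisely $\sum_{\bfa\in H^\mu}f^{(0)}(\bfa)=S^{(0)}$, so iterating the implication yields $\sum_{a\in H}f^{(m)}(a)=S^{(m)}$.

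\emph{Step 2 (query phase).} It then remains to check that an honest run passes items~\ref{it:sum}, \ref{it:consistency} and~\ref{it:US}. For item~\ref{it:sum}, unrolling the recurrence $S^{(i)}=z^{(i)}S^{(i-1)}+f^{(i)}_0(\bfalpha{i})$ gives the closed form $S^{(m)}=\prod_{j=1}^m z^{(j)}\,S+\sum_{j=1}^m\bigl(\prod_{\ell=j+1}^m z^{(\ell)}\bigr)f^{(j)}_0(\bfalpha{j})$, so the value $\Ve$ recomputes coincides with the one fixed in the commit phase. For item~\ref{it:consistency}, I would unroll $f^{(i)}=z^{(i)}f^{(i)}_0+f^{(i-1)}(\bfalpha{i},\cdot)$ starting from $f^{(m)}(\beta)$ down to $f^{(0)}=f$, tracking that $f^{(j)}$ ends up evaluated at $(\bfalpha{j+1},\dots,\bfalpha{m},\beta)$ — a tuple of $\mu_j$ coordinates, as a geometric-series count shows — so that the telescoping collapses to exactly $f^{(m)}(\beta)=f(\bfalpha{1},\dots,\bfalpha{m},\beta)+\sum_{j=1}^m z^{(j)}f^{(j)}_0(\bfalpha{j+1},\dots,\bfalpha{m},\beta)$, the identity $\Ve$ checks. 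For item~\ref{it:US}, once one notes that partial summation, restriction of variables and $\FF$-linear combination all preserve the partial-degree bound $d$ (so $f^{(m)}$ is univariate of degree $\leq d$, and $\US_{d,H}$ applies), Step~1 supplies $\sum_{a\in H}f^{(m)}(a)=S^{(m)}$, and the assumed perfect completeness of $\US_{d,H}$ makes the final check succeed with probability~$1$.

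\emph{Main obstacle.} Completeness here is bookkeeping rather than mathematics, so there is no genuine obstacle; the delicate step is item~\ref{it:consistency}, where one must follow the arities and argument lists of the $f^{(j)}_0$ carefully through the two interleaved recursions — the one defining $f^{(i)}$ and the implicit one $f^{(i)}_1=f^{(i-1)}(\bfalpha{i},\cdot)$ — so that the collapsed sum matches the verifier's formula exactly. I would also record the degree-preservation observation explicitly once and for all, so that invoking $\US_{d,H}$ at the end is legitimate, and I would phrase Step~1 so that it simultaneously yields the honest value of $S^{(i)}$ used in item~\ref{it:sum}.
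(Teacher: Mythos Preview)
Your proposal is correct and takes essentially the same approach as the paper: the core of both is the sum invariant of your Step~1, proved by exactly the same unfolding of $f^{(i+1)}=z^{(i+1)}f^{(i+1)}_0+f^{(i+1)}_1$. You are actually more thorough than the paper, which leaves the query-phase checks (your Step~2, items~\ref{it:sum} and~\ref{it:consistency}) and the degree-preservation remark implicit.
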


Next, we study the soundness error of our protocol. We recall that the soundness error of the classical protocol is $\mu d/\size{\FF}$, where $d$ is a bound on the partial degrees of the given polynomial. That of $\DCSF$, however, is bounded by $\log(\mu)D/\size{\FF}$, where $D$ is the total degree of the polynomial. Hence, $\DCSF$ offers a better soundness as long as the total degree of the polynomial does not far exceed its partial degrees.

\begin{proposition}[Soundness]\label{prop:soundness-fold} Denote by $p$ the soundness error of the univariate sumcheck protocol executed at the end of protocol $\DCSF$. Let $\mu=2^m$ for a positive integer $m$. The soundness error of $\DCSF$ for $\mu$-variate polynomials with coefficients in $\FF$ of total degree $\leq D$ is bounded above by \[ 1-\left( 1-\left(\frac{D+1}{\size{\FF}}-\frac{D}{\size{\FF}^2}\right)\right) ^m\left(1-\max\left(p,\frac{D}{\size{\FF}}\right)\right).\]
 When $p\leqslant (D+1)/\size{\FF}\leqslant 1$, this is bounded from above by $(m+1)(D+1)/\size{\FF}$.

\begin{proof} We consider an instance where $\sum_{\bfa\in H^\mu}f(\bfa)\neq S$. 
\paragraph{Notations.} Set $\f{0}{}=f$ and $S^{(0)}=S$. For $i\geqslant 1$, denote by $\tf{i}{0}$ the function $\tilde\Pr$ actually sends during round $i$. Denote by $\f{i}{0}$ and $\f{i}{1}$ the functions as defined in the protocol computed from $\f{i-1}{}$, set $S^{(i)}=z^{(i)}S^{(i-1)}+\tf{i}{0}(\bfalpha{i}{})$, and $\f{i}{}=z^{(i)}\tf{i}{0}+\f{1}{i}$ the function used in the next rounds. Write $\mu_i=2^{m-i}$ for the arity of the functions superscripted by $(i)$.\\

This soundness proof is divided into two steps. \begin{enumerate}
    \item We first deal with the commit phase. At each round, we give an upper bound on the probability that the sum of the function $\f{i}{}$ considered in this round has the claimed value $S^{(i)}$. This yields an upper bound on the probability that the sum of the last function $\f{m}{}$ considered in the protocol has the sum $S^{(m)}$.
    \item  We then consider what happens in the query phase of the protocol.
\end{enumerate}  
\paragraph{Commit phase.} We begin by proving by induction that for all $i\in \{0\dots m\}$, the number
\[ s^{(i)}=\PP_{\bfalpha{1}{},\dots,\bfalpha{i}{}}\left[ \sum_{\bfa\in H^{\mu_i}}\f{i}{}(\bfa)\neq S^{(i)}\right]\]
satisfies
\[ s^{(i)}\geqslant  \left(1-\left( \frac{D+1}{\size{\FF}}-\frac{D}{\size{\FF}^2}\right)\right)^i. \]
We know that $s^{(0)}=\PP[\sum_{\bfa\in H^\mu}f(\bfa)\neq S]=1$. Let $i\geqslant 1$. Let us compute the probability
\[ \PP\left[ \sum_{\bfa \in H^{\mu_i}} \f{i}{}(\bfa)=S^{(i)}\Bigm \vert \sum _{\bfa \in H^{\mu_{i-1}}}\f{i-1}{}(\bfa)\neq S^{(i-1)}\right]\]
using the law of total probability with respect to the event ``$\tf{i}{0}=\f{i}{0}$'' and its complement.

\begin{enumerate}[{label=(\Alph*)}]
\item\label{it:tf0=f0} In case $\tf{i}{0}=\f{i}{0}$, its sum over $H^{\mu_i}$ is not $S^{(i-1)}$. Then the sum of $\f{i}{}=z^{(i)}\tf{i}{0}+\f{i}{1}$ over $H^{\mu_i}$ is $S^{(i)}=z^{(i)}S^{(i-1)}+ \tf{i}{0}(\bfalpha{i}{})$ with probability $1/\size{\FF}$. 
    \item\label{it:tf0!=f0} In case $\tf{i}{0}\neq \f{i}{0}$,
    \begin{itemize}
        \item $\tf{i}{0}(\bfalpha{i})$ coincides with $f_0(\bfalpha{i})$ with probability say $v\leqslant D/\size{\FF}$ by the Schwartz-Zippel Lemma (see Lemma \ref{lem:SZ});
        \item if it does not, the sum of $\f{i}{}=z^{(i)}\tf{i}{0}+\f{i}{1}$ over $H^{\mu_i}$ coincides with $S^{(i)}=z^{(i)}S^{(i-1)}+\tf{i}{0}(\bfalpha{i})$ with probability $1/\size{\FF}$.
    \end{itemize}  
    Hence, setting
    \begin{equation}\label{eq:def_w}
        w=\PP\left[\sum_{\bfa} \f{i}{}(\bfa)= S^{(i)} \Bigm\vert \left(\sum_{\bfa} \f{i-1}{}(\bfa)\neq S^{(i-1)}\right) \wedge \left(\tf{i}{0}\neq \f{i}{0}\right)\right],
    \end{equation}
    we get that
    \[w=v+(1-v)/\size{\FF} \leq \frac{D+1}{\size{\FF}}-\frac{D}{\size{\FF}^2}=\colon A.\]
\end{enumerate}
Since $w \geqslant 1/\size{\FF}$, the sum of $\f{i}{}$ equals $S^{(i)}$ with probability less than $w$ in each of these two cases so 
\[ \PP\left[ \sum_{\bfa \in H^{\mu_i}} \f{i}{}(\bfa)=S^{(i)}\Bigm \vert \sum _{\bfa \in H^{\mu_{i-1}}}\f{i-1}{}(\bfa)\neq S^{(i-1)}\right] \leq w.\]
Hence
\begin{align*} 1-s^{(i)}&=\PP_{\bfalpha{1}{},\dots,\bfalpha{i}{}}\left[ \sum_{\bfa\in H^{\mu_i}}\f{i}{}(\bfa)= S^{(i)}\right]\\ &\leq (1-s^{(i-1)})\cdot 1+s^{(i-1)}\cdot w \tag{by the law of total probability}\\
&\leq 1-(1-A)^{i-1}+(1-A)^{i-1}w \tag{since $A\leq 1$} \\
&=  1-(1-A)^i \tag{since $w\leq A$}
\end{align*}
from which we deduce that \[ s^{(i)}\geq (1-A)^i.\]
\paragraph{Query phase.} Now, let us come to the last steps of the protocol. With probability $s^{(m)}$, we have $\sum_{a\in H}\f{m}{}(a)\neq S^{(m)}$. Denote by $\tf{m}{}$ the function sent by $\tilde\Pr$, which would be equal to $\f{m}{}$ if the prover were honest. 
\begin{enumerate}[{label=(\Alph*)}]
    \item\label{it:tfm=fm} If $\tf{m}{}=\f{m}{}$, then $\sum_{a\in H}\tf{m}{}\neq S^{(m)}$, and $\Ve$ accepts if and only if the univariate sumcheck on $\tf{m}{}$ (Step \ref{it:US}) passes, which happens with probability $p$.
    \item\label{it:tfm!=fm} If $\tf{m}{}\neq\f{m}{}$, then for $\Ve$ to accept, the evaluations of $\tf{m}{}$ and $\f{m}{}$ at $\beta$ need to coincide (Step \ref{it:consistency}), which happens with probability at most $D/\size{\FF}$.
\end{enumerate} 
In total, 
\begin{align*} 
\PP_{\bfalpha{1}{}\!,\dots,\bfalpha{m}{}}
\left[ \lpvr{\mu,D,\FF,H}{f}{S}=\Acc\right] &\leq \left(1-s^{(m)}\right)+
s^{(m)}\max\left(p,\frac{D}{\size{\FF}}\right) \\
&\leq 1-(1-A)^m\left(1-\max\left(p,\frac{D}{\size{\FF}}\right)\right). \\
\end{align*}
\end{proof}
\end{proposition}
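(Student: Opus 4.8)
The plan is to handle the commit phase and the query phase of Protocol~\ref{pro:Fold-DCS} separately. Fix an instance with $\sum_{\bfa\in H^\mu}f(\bfa)\neq S$ and an arbitrary cheating prover $\tilde\Pr$, set $\f{0}{}=f$ and $S^{(0)}=S$, and for $i\geq1$ let $\tf{i}{0}$ be the oracle the prover actually supplies at round $i$, while $\f{i}{0}$, $\f{i}{1}=\f{i-1}{}(\bfalpha{i},\cdot)$, $\f{i}{}=z^{(i)}\tf{i}{0}+\f{i}{1}$ and $S^{(i)}=z^{(i)}S^{(i-1)}+\tf{i}{0}(\bfalpha{i})$ denote the objects the honest protocol builds from the prover's messages; by the PIOP degree bound $\f{i}{0}$ and $\tf{i}{0}$ both have total degree $\leq D$.

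For the commit phase I would prove by induction on $i$ that $s^{(i)}\coloneqq\PP_{\bfalpha{1},\dots,\bfalpha{i}}\big[\sum_{\bfa\in H^{\mu_i}}\f{i}{}(\bfa)\neq S^{(i)}\big]\geq(1-A)^i$, where $A=(D+1)/\size{\FF}-D/\size{\FF}^2$. The base case is the hypothesis $s^{(0)}=1$. For the step I would condition on whether the prover played honestly at round $i$, i.e.\ on the event $\{\tf{i}{0}=\f{i}{0}\}$: on it, $\sum_{H^{\mu_i}}\f{i}{0}=\sum_{H^{\mu_{i-1}}}\f{i-1}{}\neq S^{(i-1)}$, so $\sum_{H^{\mu_i}}\f{i}{}=S^{(i)}$ becomes $z^{(i)}\big(\sum\f{i}{0}-S^{(i-1)}\big)=0$, of probability $1/\size{\FF}$ over the uniform $z^{(i)}$; on the complement, with probability $v\leq D/\size{\FF}$ (Schwartz--Zippel over $\FF$, Lemma~\ref{lem:SZ}, applied to the nonzero polynomial $\tf{i}{0}-\f{i}{0}$ of total degree $\leq D$) one has $\tf{i}{0}(\bfalpha{i})=\f{i}{0}(\bfalpha{i})$, and otherwise $\sum_{H^{\mu_i}}\f{i}{}=S^{(i)}$ is once more a linear equation in $z^{(i)}$ with nonzero constant term, hence of probability $\leq1/\size{\FF}$. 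The dishonest branch thus contributes at most $v+(1-v)/\size{\FF}\leq A$ and the honest branch at most $1/\size{\FF}\leq A$, so $\PP\big[\sum\f{i}{}=S^{(i)}\mid\sum\f{i-1}{}\neq S^{(i-1)}\big]\leq A$; combining with $1-s^{(i)}\leq(1-s^{(i-1)})+A\,s^{(i-1)}=1-(1-A)s^{(i-1)}$, monotonicity in $s^{(i-1)}$ (for $\size{\FF}$ large enough that $A\leq1$) and the inductive hypothesis yields $s^{(i)}\geq(1-A)^i$.

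For the query phase I would first record the polynomial identity
\[ \f{m}{}(\beta)=f(\bfalpha{1},\dots,\bfalpha{m},\beta)+\sum_{j=1}^{m}z^{(j)}\,\f{j}{0}(\bfalpha{j+1},\dots,\bfalpha{m},\beta), \]
obtained by iteratively substituting $\f{i}{}=z^{(i)}\f{i}{0}+\f{i-1}{}(\bfalpha{i},\cdot)$ down to $\f{0}{}=f$; this is exactly the equation checked in Step~\ref{it:consistency}, so with the honest last oracle $\f{m}{}$ that check passes identically in $\beta$. Conditioning on $\sum_{a\in H}\f{m}{}(a)\neq S^{(m)}$, which happens with probability $s^{(m)}\geq(1-A)^m$: if $\tilde\Pr$ sends the honest $\f{m}{}$, Step~\ref{it:consistency} passes but $\US_{d,H}$ (Step~\ref{it:US}) runs on a false claim and passes with probability $\leq p$; if $\tilde\Pr$ sends some $\tf{m}{}\neq\f{m}{}$, then Step~\ref{it:consistency} passing forces $\beta$ to be a root of the nonzero univariate polynomial $\tf{m}{}-\f{m}{}$ of degree $\leq D$, of probability $\leq D/\size{\FF}$. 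Hence
\[ \PP[\Ve\text{ accepts}]\leq(1-s^{(m)})+s^{(m)}\max(p,D/\size{\FF})=1-s^{(m)}\big(1-\max(p,D/\size{\FF})\big), \]
which by $s^{(m)}\geq(1-A)^m$ is at most $1-(1-A)^m\big(1-\max(p,D/\size{\FF})\big)$, the claimed bound; the simplified estimate under $p\leq(D+1)/\size{\FF}\leq1$ then follows from $A\leq(D+1)/\size{\FF}$ and $\max(p,D/\size{\FF})\leq(D+1)/\size{\FF}$ together with Bernoulli's inequality.

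I expect the commit-phase induction to be the main obstacle, precisely because of the bookkeeping of the two independent error sources: the ``folding slack'' $1/\size{\FF}$ — the probability that a uniformly random $z^{(i)}$ accidentally makes an incorrect sum claim consistent — must be \emph{added} to, not compounded with, the Schwartz--Zippel error $D/\size{\FF}$, and one has to observe that the honest-prover branch contributes only $1/\size{\FF}\leq A$ so that the single quantity $A$ controls both branches. The other delicate point is the unrolled consistency identity above, since it is what forces the prover's last oracle $\tf{m}{}$ to agree with the genuine $\f{m}{}$ at the random point $\beta$, thereby tying the final univariate sumcheck back to $f$.
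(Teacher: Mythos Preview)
Your proof is correct and follows essentially the same approach as the paper's: the same two-phase split, the same induction $s^{(i)}\geq(1-A)^i$ via the case distinction on $\tf{i}{0}=\f{i}{0}$, and the same query-phase dichotomy on $\tf{m}{}=\f{m}{}$. The only slip is notational: in your unrolled consistency identity the $\f{j}{0}$ should be the prover's oracles $\tf{j}{0}$, since those are what the verifier actually queries and what your own definition $\f{i}{}=z^{(i)}\tf{i}{0}+\f{i}{1}$ uses.
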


\begin{remark} In most cases, this upper bound on the soundness is tight. The best strategy for a malicious prover can be deduced from the proof, and is similar to that used in the standard protocol: at each step, send a function which has the claimed sum. However, there are a few rare instances in which this strategy is not possible. Consider the following example. The field $\FF$ has characteristic 2, the set $H$ has even cardinality, and $f$ is a linear polynomial in 4 variables. Then the sum of $f$ over $H^4$ is necessarily 0. During the first round of our protocol, the prover sends a linear function in 2 variables: such a function always sums to 0 over $H^2$. Hence, if they want to convince a verifier that the sum is anything but 0, they cannot implement the optimal strategy at the first round, and they actually have at best a chance of $1/2$ of convincing the verifier.
\end{remark}

\subsubsection{A detailed example}

Let us write out the protocol for a polynomial $f\in\FF[\bfx_{1:4}]$. Here, $m=2$ so the protocol has two rounds.
\begin{itemize}
    \item Round 1: The prover $\Pr$ computes \[ \f{1}{0}(x_1,x_2)=\sum_{a_3,a_4\in H}f(x_1,x_2,a_3,a_4)\]
    and sends it to $\Ve$. The verifier $\Ve$ picks $\alph{1}{1},\alph{1}{2},z^{(1)}\in\FF$ at random and sends them to $\Pr$. Both $\Pr$ and $\Ve$ implicitly define the function \[ \f{1}{1}(x_3,x_4)=f\left(\alph{1}{1},\alph{1}{2},x_3,x_4\right)\]
    which $\Ve$ can access via $f$, knowing $\alph{1}{1},\alph{1}{2}$, as well as \begin{align*}\f{1}{}&=z^{(1)}\f{1}{0}+\f{1}{1}\\
    S^{(1)}&=z^{(1)}S+\f{1}{0}\left(\alph{1}{1},\alph{1}{2}\right). \end{align*}
    \item Round 2: The prover $\Pr$ computes \[ \f{2}{0}(x)=\sum_{a\in H}\f{1}{}(x,a)\]
    and sends it to $\Ve$, who then chooses $\alph{2}{1},z^{(2)}\in \FF$ at random and implicitly defines \[ \f{2}{1}(x)=\f{1}{}\left({\alph{2}{1}},x\right)\]
    as well as \begin{align*} \f{2}{}(x)&=z^{(2)}\f{2}{0}(x)+\f{2}{1}(x) \\
           &=z^{(2)} \f{2}{0}(x)+ z^{(1)}\f{1}{0}\left({\alph{2}{1}},x\right)+ f\left( \alph{1}{1},\alph{1}{2},\alph{2}{1},x\right)\end{align*} 
    and \begin{align*}
           S^{(2)}&=z^{(2)}S^{(1)}+\f{2}{0}\left(\alph{2}{1}\right)\\
           &= z^{(2)}\left(z^{(1)}S+\f{1}{0}\left(\alph{1}{1},\alph{1}{2}\right) \right)+\f{2}{0}\left(\alph{2}{1}\right).\end{align*}
    \item Final sumcheck: The verifier $\Ve$ checks whether \[ \sum_{a\in H} \f{2}{}(a)=S^{(2)}.\]
    This last sumcheck requires computing $S^{(2)}$ as described by the formula above, using oracle queries to $\f{1}{0},\f{2}{0}$. The actual computation of the sum is facilitated by requiring $\Pr$ to give oracle access to $f^{(2)}$ to $\Ve$, who checks its correctness by choosing $\beta\in\FF$ and verifying the equality \[ f^{(2)}(\beta)=z^{(2)} \f{2}{0}(\beta)+ z^{(1)}\f{1}{0}\left({\alph{2}{1}},\beta\right)+ f\left( \alph{1}{1},\alph{1}{2},\alph{2}{1},\beta\right).\]
    This in turn requires oracle queries to $f,\f{1}{0},\f{2}{0},\f{2}{}$.
\end{itemize}

\subsection{Complexities of $\DCSF$ in the ROM}\label{subsec:complexities-ROM}
Here, we consider our protocol $\DCSF$ for $\mu=2^m$-variate polynomials of partial degree at most $d$ and total degree at most $D$ over a finite field $\FF$.

\subsubsection{Complexities without the last univariate sumcheck} 

We first compute the complexities without taking the last univariate sumcheck into account. 

\paragraph{Round complexity} Each loop of $\DCSF$ runs in one round. There are thus $\log(\mu)+1$ rounds.

\paragraph{Randomness} At the $i^{th}$ loop of $\DCSF$, the verifier $\Ve$ picks a random $2^{m-i}$-tuple $\bfalpha{i}$ of elements of $\FF$, as well as a random element $z\in\FF$. This amounts to $\mu+\log\mu$ random elements during the commitment phase. In addition, at Step \ref{it:consistency}, $\Ve$ picks an element of $\FF$. The total randomness is $\mu+\log\mu+1$.

\paragraph{Communication complexity} The messages sent to $\Pr$ by $\Ve$ are exactly the $\mu+\log\mu+1$ random elements she picks along the loops. The prover's messages will be commitments of the $\log \mu+1$ polynomials he sends. 

\paragraph{Queries} During the query phase, the verifier $\Ve$ queries $\log \mu$ evaluations at Step \ref{it:sum} to compute $S^{(m)}$  and $\log \mu+2$ evaluations at Step  \ref{it:consistency} check the value of $f^{(m)}(\beta)$. In total $\Ve$ makes $\query=2(\log(\mu)+1)$ queries. 

\begin{remark}\label{rk:batch}
Note that the evaluations queried for computing $S^{(m)}$ and $f^{(m)}(\beta)$ at Steps \ref{it:sum} and \ref{it:consistency} can be batched using the sole evaluation point $(\bfalpha{1},\dots,\bfalpha{m},\beta)\in \FF^\mu$. For $S^{(m)}$, we evaluate the polynomials $f_0^{(1)}(x_1,\dots,x_{\mu/2})$, $f_0^{(2)}(x_{\mu/2+1},\dots,x_{3\mu/4})$, \dots, and $f_0^{(m)}(x_{\mu-1})$, whereas for $f^{(m)}(\beta)$, we evaluate $f_0^{(i)}$ as polynomials in the last $\mu/(2^i)$ variables (i.e.  $f_0^{(i)}(x_{\mu-\mu/2^i+1},\dots,x_\mu)$ ).
\end{remark}

\paragraph{Prover complexity} The predominant computations on the prover's side are the ones performed in the loops. At the $i^{th}$ loop, $\Pr$ compute sums over $H^{2^{m-i}}=H^{\mu_i}$.

Write \[ f^{(i)}(\bfx)=\sum_{j_1,\dots,j_{\mu_i}} \lambda_{j_1,\dots,j_{\mu_i}} \prod_{k=1}^{\mu_i}x_k ^{j_k}.\]

Then
\begin{equation}\label{eq:horrible_sum}
    \sum_{\bfa \in H^{2^{m-i}}} f^{(i)}(\bfa)=\sum_{j_1,\dots,j_{\mu_i}} \lambda_{j_1,\dots,j_{\mu_i}} \prod_{k=1}^{\mu_i}\sigma_{j_k}
\end{equation}
where 
\[\sigma_j = \sum_{a \in H} a^j.\]
All the $\sigma_j$ can be simultaneously computed in $d \size{H}$ additions and $d \size{H}$ multiplications in $\FF_q$. As the polynomial $f^{(i)}$ has partial degree $d$ in each variable, the number of terms in \eqref{eq:horrible_sum} is bounded from above by $d^{\mu_i}$. Each term can be computed using $\mu_i$ multiplications in $\FF_q$. Knowing the sums $\sigma_j$, the total number of $\FF_q$-operations to compute the sum of $f^{(i)}$ over $H^{\mu_i}$ is $\mu_i d^{\mu_i}$. Summing over the $m$ rounds and bounding each term by the largest one, we get
\[\sum_{i=1}^m \mu_i d^{\mu_i} \leq m \frac{\mu}{2}d^{\mu/2}\]
and the overall complexity is $O(m \mu d^{\mu/2}+d \size{H})$ $\FF_q$-operations.

\paragraph{Verifier complexity} The verifier $\Ve$ computes, in the end, two linear combinations of these evaluations. The coefficients of this linear combination are products of field elements; in total, there are $\log\mu$ products to compute the products of the $z^{(i)}$ as well as $2\log\mu$ sums and $2\log\mu+1$ products to compute the linear combinations. This amounts to $2\log\mu$ sums and $3\log\mu+1$ products in the field $\FF$.

\subsubsection{Total complexities}\label{sssec:total_complexities}

\begin{table}[h]
\begin{tabular}{|l|c|c|c|}
\hline

                         & $\DCSF$ &  \multicolumn{2}{c|}{$\DCSF$ + $\US_{d,H}$}\\\cline{3-4} 
                         &   without $\US_d$ & Unstructured $H$ & $H$ coset \\ \hline
Round                  & \multicolumn{2}{c|}{$\log\mu +1$} & $\log\mu  +2 $ \\ \hline
Randomness in $\FF$      &  \multicolumn{2}{c|}{$\mu + \log \mu + 1$} & $\mu + \log \mu + 2$\\\hline
Communication & $\mu+\log \mu$ & $\mu+\log \mu+\size{H}$ & $\mu+\log d$ \\\hline
Number of commitments                 &  \multicolumn{2}{c|}{$\log \mu + 1$} & $\log\mu+3$ \\\hline
Queries                 & $2(\log\mu+1)$ & $2(\log\mu+1)+\size{H}$ & $2(\log(\mu)+2)$ \\\hline
Prover's time (op. in $\FF$) & \multicolumn{3}{|c|}{$O(\log(\mu) \mu d^{\mu/2}+d \size{H})$} \\\hline
Verifier's time (op. in $\FF$) & $5\log \mu + 1$ & $5\log \mu +\size{H}$ & $O(\log (\mu d) + \log \size{H})$  \\\hline

\end{tabular}
\caption{Total complexities of $\DCSF$ including the ones of $\US_{d,H}$. For unstructured $H$, $\US_{d,H}$ is performed by $\Ve$ without the prover's help. For $H$ coset of $(\FF^\times,\times)$ or $(\FF,+)$, $\Ve$ and $\Pr$ perform Aurora's sumcheck protocol \cite{aurora}.}\label{table:total_complexities}
\end{table}

\paragraph{Univariate sumcheck protocols} Univariate sumcheck protocols are protocols for the language
\[ \mathcal{L}_{d,\FF,H}=\left\{ (f,S)\in \FF[x]_d\times\FF\Bigm\vert \sum_{a\in H}f(a)=S\right\} \]
in which the verifier $\Ve$ has oracle access to $f$.
By default, the verifier may just query $|H|$ values of $f$ and compute the sum. However, in order to reduce the number of queries, there are better options in specific cases.
In particular, when $H$ is a coset modulo a subgroup of either $(\FF,+)$ of $(\FF^\times,\times)$, such protocols may be found in Aurora \cite[§5]{aurora}. 
 The resulting PIOP for the sumcheck relation, described in detail in \cite[§6.1]{polyFRI} runs in one round. The prover gives access to two polynomials, which the verifier queries at a random element of $\FF$. The verifier performs $O( \log \size{H})$ field operations.

\section{Instantiating $\DCSF$ with a polynomial commitment scheme}

\label{sec:dsc-pcs}

In order to instantiate the oracle accesses in $\DCSF$, we may use $\DCSF$ with a polynomial commitment scheme (PCS) for $\mu$-variate polynomials.

\subsection{Polynomial commitment schemes}

Let us define a PCS as needed for $\DCSF$.

\begin{definition}\label{def:PCS}
	A $\mu$-variate $(d,D)$-degree polynomial commitment scheme (PCS) is a quadruple (\textsf{Setup}, \textsf{Commit}, \textsf{Open}, \textsf{Eval}) that satisfies the following properties.
	\begin{itemize}
		\item $\mathsf{Setup}\left(1^\lambda, \mu, d, D \right)$ generates public parameters $\pp$ (a structured reference string) suitable to commit to polynomials in $\PoldD$.
		\item $\Commit\left(\pp, f \right)$ outputs a commitment $C$ to the polynomial $f \in \PoldD$,   using $\pp$.
		\item $\Open\left(\pp, f, C \right)$ checks if the commitment $C$ is correctly computed from the polynomial $f \in \PoldD$ using $\pp$.
		\item$\mathsf{Eval}$ is a (public-coin) protocol between two parties, a prover $\Pr_{\mathsf{PC}}$ and a verifier $\Ve_{\mathsf{PC}}$ that either accepts or rejects. The prover is given a polynomial $f \in \PoldD$. Both parties receive the following:
		\begin{itemize}
			\item the security parameter $\lambda$, the arity $\mu$ and the degrees $d$ and $D$,
			\item the public parameters $\pp$ , where $\pp=\mathsf{Setup}\left(1^\lambda, \mu, d, D\right)$,
			\item an evaluation point $x$ and the alleged opening $y$,
			\item the alleged commitment $C$ for the polynomial $f$.
		\end{itemize}
	\end{itemize}
\end{definition}

The protocol $\DCSF$ for $\mu=2^m$-variate polynomials in the polynomial IOP model requires $\Ve$ to query $2\log \mu + 2$ polynomial evaluations.
As the partial and total degrees do not increase through the protocol, the same commitment scheme for $\mu$-variate polynomials may be used throughout the protocol. In particular, if the PCS in question requires a trusted setup, this may be dealt with beforehand. Moreover, as noted in Remark \ref{rk:batch}, it is possible for $\Ve$ to get all theses evaluations at one by interacting with $\Pr$ via a batched-evaluation protocol, which we will recall here.

\begin{definition}
	A $\mu$-variate $(d,D)$-degree PCS as in Definition \ref{def:PCS} allows \emph{batched evaluation} if for every positive integer $\ell$, there exists a two-party protocol $\lEval$ 
	which takes as input an $\ell$-tuple $(f_1,\dots,f_\ell)$ of polynomials and provides both parties with the following:
	\begin{itemize}
		\item the security parameter $\lambda$, the arity $\mu$ and the degrees $d$ and $D$,
		\item the public parameters $\pp$ , where $\pp=\mathsf{Setup}\left(1^\lambda, \mu, d, D\right)$.
		\item An evaluation point $x$ and the alleged openings $y_1,\dots,y_\ell$,
		\item the alleged commitments $C_1,\dots,C_\ell$ for the polynomials $f_1,\dots f_\ell$.
	\end{itemize}
\end{definition}

\begin{definition}
    A function $f\colon \mathbb{N}\to\mathbb{N}$ is said to be negligible if for any positive integer $c$, there is an integer $\lambda_c$ such that for any $\lambda\geqslant \lambda_c$,~ $f(\lambda)<\lambda^{-c}$.
    In that case, we write $f(\lambda)=\negl(\lambda)$.
\end{definition}

\begin{definition} A $\mu$-variate $(d,D)$-degree PCS as in Definition \ref{def:PCS} is said to be \begin{itemize}
    \item \emph{extractable} if for any PPT adversary that computes a valid commitment $C$, there is a PPT extractor algorithm which, given $C$, produces a function $f$ that opens $C$ with overwhelming probability.
    Formally, for any PPT adversary $\tilde P$, there exists a PPT algorithm $E_{\tilde P}$ such that 
    \[ \PP \left[ \begin{array}{rc|l}& \exists g \colon C=\Commit(\pp,g) &   \pp \leftarrow \mathsf{Setup}(1^\lambda,\mu,d,D)\\\wedge& \Open(\pp,f,C)=\Rej & C\leftarrow \tilde P(\pp) \\ &   & f\leftarrow E_{\tilde P}(C,\pp)\end{array}\right]=\negl(\lambda),\]
    \item \emph{computationally binding} if for any probabilistic polynomial-time (PPT) algorithm $A$, \[\PP\left[
    \begin{array}{rc|l}&  f\neq g& \\\wedge& \Open(\pp,f,C) = \Acc & \pp \leftarrow \mathsf{Setup}(1^\lambda,\mu,d,D) \\ \wedge& \Open(\pp,g,C) = \Acc & f,g,C \leftarrow A(\pp) \end{array}
\right]=\negl(\lambda),\]
    \item \emph{computationally evaluation-binding}  if for any PPT algorithm $A$ and PPT prover $\tilde\Pr$, 
    \[\PP_{}\left[\begin{array}{rc|l}&y \neq y'&   \\\wedge& \langle \tilde{\Pr}(C,x,y) \overset{\Eval}{\longleftrightarrow} \Ve_{\PC}(C,x,y) \rangle = \Acc &\pp \leftarrow \mathsf{Setup}(1^\lambda,\mu,d,D)  \\ \wedge& \langle \tilde{\Pr}(C,x,y') \overset{\Eval}{\longleftrightarrow} \Ve_{\PC}(C,x,y')\rangle = \Acc  & C,x,y,y' \leftarrow A(\pp)\end{array}\right]=\negl(\lambda).\]    
\end{itemize}
The evaluation-binding property for PCS with batched evaluation of $\ell$ polynomials is similar: the top line $y\neq y'$ in the probability is replaced by $\exists i\in\{ 1,\dots\ell\}, y_i\neq y_i'$. 
\end{definition}

\begin{remark}
    The extractability condition defined above is strong, and may require working in a model with additional assumptions. For instance, the PCS used in Section \ref{sssection:zeromorph} is extractable in the AGM model.
\end{remark}

\subsection{Soundness and complexity of our protocol with a PCS allowing batching evaluation}

In the following, we will use a PCS that allows batched evaluation.  We write $\sft(\Pr_{\ell-\mathsf{PC}})$ (resp. $\sft(\Ve_{\ell-\mathsf{PC}})$) for the prover's (resp. verifier's) time complexity for $\lEval$. We denote by $\round(\lEval)$ the number of rounds of the $\ell$-batched evaluation protocol, and by $\rand(\textsf{Commit})$ and  $\rand(\lEval)$ the amount of random field elements required in $\Commit$ and $\lEval$. The notation $\comm(\lEval)$ stands for the communication complexity of the $\ell$-batched evaluation protocol.

When instantiating $\DCSF$,  we set 
$\ell=2(\log \mu +1)$. We suppose that $\Pr$ begins by sending a commitment $\Commit(f)$ of the initial polynomial $f$ to $\Ve$.
Each time $\Pr$ is supposed to send a polynomial $\f{i}{}$, he now sends $\Commit(\f{i}{})$. At the end of the protocol, $\Ve$ and $\Pr$ engage in the protocol $\lEval$ for $\Ve$ to get the evaluations she needs to compute $S^{(m)}$ and $f^{(m)}(\beta)$, as explained in Remark \ref{rk:batch}.

The complexities of the instantiated version of $\DCSF$ are thus the sum of the complexities of the IOP protocol and the ones of $\lEval$, taking into account the last univariate sumcheck.

In this setting, the soundness of our protocol is no longer statistical soundness, but computational soundness: polynomial commitment schemes usually have a computational evaluation-binding property, meaning that for $\Pr$ to convince $\Ve$ of a false evaluation value, $\Pr$ would have to solve a computationally hard problem. 

A simple adaption of the soundness of the polynomial IOP protocol (Proposition \ref{prop:soundness-fold}) gives the soundness of the instantiated version depending on the soundness of the PCS involved. 

\begin{corollary}[Computational soundness] Let $\mu,d,D$ be positive integers. Let $\lambda$ be a security parameter. Consider protocol $\DCSF$ for $\mu$-variate polynomials with coefficients in $\FF$ of total degree $\leq D$ and partial degree $\leq d$, instantiated with a PCS allowing batch-evaluation. This PCS is supposed to be\begin{itemize}
    \item extractable,
    \item computationally binding,
    \item computationally $\ell$-batch evaluation binding, where $\ell=2\log(\mu)+1$.
\end{itemize}
Denote by $p$ the soundness error of the univariate sumcheck protocol executed at the end of $\DCSF$. Then, for any probabilistic polynomial-time prover $\tilde \Pr$, the probability \[ \PP_{\alpha_1,\dots,\alpha_{\mu}}\left[\lpvr{\mu,d,\FF,H}{f}{S}=\Acc\Bigm \vert \sum_{\bfa\in H^\mu}f(\bfa)\neq S\right]\]
 is bounded from above by  \[  (m+1)\varepsilon(\lambda) + 1-\left( 1-\left(\frac{D+1}{\size{\FF}}-\frac{D}{\size{\FF}^2}
	\right)\right) ^m\left(1-\max\left(p,\frac{D}{\size{\FF}}\right)+\sigma(\lambda)\right)\]
	where $\varepsilon$ and $\sigma$ are negligible functions.
	 When $p\leq D/\size{\FF}$, this is bounded from above by
	\[ (m+1)\left(\varepsilon(\lambda)+ \frac{(D+1)}{\size{\FF}}(1+\sigma(\lambda))\right).\]
\end{corollary}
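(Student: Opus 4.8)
The plan is to reduce the computational-soundness statement to the information-theoretic soundness already established in Proposition~\ref{prop:soundness-fold} by means of a hybrid argument that replaces the PCS oracles with genuine polynomial oracles, paying a negligible price at each replacement. First I would invoke extractability: given the PPT cheating prover $\tilde\Pr$, for each commitment it outputs — the initial $\Commit(f)$ and the $m$ commitments $\Commit(\tf{i}{})$ — I apply the guaranteed extractor $E_{\tilde\Pr}$ to obtain polynomials $f,\tf{1}{},\dots,\tf{m}{}$ that open those commitments with overwhelming probability. The union bound over the $m+1$ commitments yields a failure term of the form $(m+1)\varepsilon(\lambda)$ where $\varepsilon$ absorbs both the extraction-failure probability and, via computational binding, the event that two distinct polynomials open the same commitment (so that the extracted $f^{(i)}$ is essentially unique). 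Conditioning on the complement of this bad event, we may treat $\tilde\Pr$ as an (information-theoretic) malicious prover in the PIOP model who genuinely sends the extracted polynomials.

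Next I would handle the final $\lEval$ step. In the PIOP analysis the verifier's accept probability in the query phase depends on oracle queries to $f,\f{1}{0},\dots,\f{m}{0},\f{m}{}$ all being answered truthfully. In the instantiated protocol these answers come from $\lEval$ on the batched point $(\bfalpha{1},\dots,\bfalpha{m},\beta)$. By $\ell$-batch evaluation-binding with $\ell=2\log\mu+1$, the probability that $\tilde\Pr$ can make $\Ve_{\PC}$ accept openings $y_i$ that differ from the true evaluations of the extracted polynomials is negligible; I would fold this into a term $\sigma(\lambda)$. On the complement, the values $\Ve$ obtains coincide with the true evaluations of the extracted polynomials, so the query phase behaves exactly as in the PIOP model, and the commit-phase analysis of Proposition~\ref{prop:soundness-fold} applies verbatim to the extracted functions. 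The only place $\sigma(\lambda)$ must be inserted inside the product expression is where the proof of Proposition~\ref{prop:soundness-fold} bounds the probability that $\Ve$ accepts in the query phase given $\sum_{a\in H}\f{m}{}(a)\neq S^{(m)}$: there the bound $\max(p,D/\size\FF)$ is replaced by $\max(p,D/\size\FF)+\sigma(\lambda)$, which propagates into the stated formula
\[ (m+1)\varepsilon(\lambda)+1-\left(1-\left(\tfrac{D+1}{\size\FF}-\tfrac{D}{\size{\FF}^2}\right)\right)^m\left(1-\max\left(p,\tfrac{D}{\size\FF}\right)+\sigma(\lambda)\right).\]

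Finally I would derive the simplified bound. Assuming $p\le D/\size\FF$, the same elementary inequalities used at the end of Proposition~\ref{prop:soundness-fold} (namely $1-(1-A)^m(1-x)\le m A + x$ for $A,x\in[0,1]$, together with $A=\frac{D+1}{\size\FF}-\frac{D}{\size{\FF}^2}\le\frac{D+1}{\size\FF}$) collapse the product into $m\cdot\frac{D+1}{\size\FF}+\frac{D+1}{\size\FF}+\sigma(\lambda)$ up to the $(m+1)\varepsilon(\lambda)$ additive term, and regrouping gives $(m+1)\bigl(\varepsilon(\lambda)+\frac{D+1}{\size\FF}(1+\sigma(\lambda))\bigr)$ after absorbing the cross term $m\sigma(\lambda)\frac{D+1}{\size\FF}$ into the displayed product. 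The main obstacle I anticipate is bookkeeping the hybrid carefully: one must make sure the extractor is applied to commitments that are \emph{adaptively} chosen by $\tilde\Pr$ after seeing the verifier's challenges $\bfalpha{i},z^{(i)}$, so the extraction and binding bounds have to hold for each round's commitment conditioned on the transcript so far, and the $\ell$-batch evaluation-binding game must be set up with the same conditioning. Getting the quantifiers right — PPT extractor per commitment, union over $m+1$ commitments, all inside the probability over the $\bfalpha{i}$ — is the delicate part; the arithmetic at the end is routine.
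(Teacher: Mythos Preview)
Your proposal is correct and follows essentially the same route as the paper: extract the committed polynomials via extractability and binding (paying $(m+1)\varepsilon(\lambda)$ through a hybrid/union-bound argument), invoke $\ell$-batch evaluation-binding so that the opened values agree with the true evaluations of the extracted polynomials except with probability $\sigma(\lambda)$, and then rerun the query-phase case analysis of Proposition~\ref{prop:soundness-fold} with the extra $\sigma(\lambda)$ additive slack. One minor accounting slip: the $m+1$ commitments to be extracted are those to $\tf{1}{0},\dots,\tf{m}{0}$ and to $\tf{m}{}$, not to $f$ (which is the fixed input polynomial on which the soundness statement is conditioned) together with ``the $\tf{i}{}$''; the total count, and hence the bound, is unchanged.
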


\begin{proof}
	We need to adapt the proof of Proposition \ref{prop:soundness-fold}.

	\paragraph{Commit phase.} During the $i^{th}$ round, $\tilde\Pr$ sends a commitment $\tilde C_i$. As the PCS is extractable and computationally binding, with probability $1-\negl(\lambda)$, exactly one function which opens $\tilde C_i$ can be extracted from $\tilde C_i$. 
	The hybrid argument \cite[Theorem 3.8]{hashoracle} now ensures that there exists a negligible function $\varepsilon(\lambda)$ such that, with probability $1-m\cdot \varepsilon(\lambda)$,  for every $i\in \{1\dots m\}$, exactly one function which opens $\tilde C_i$ can be extracted from $\tilde C_i$. We will denote this function by $\tf{i}{0}$. 
	
	In this case, we may still define $f^{(i)},S^{(i)}$ using $\tf{i}{0}$ as in the proof of Proposition \ref{prop:soundness-fold}.
	Then the lower bound for \[ s^{(i)}=\PP_{\bfalpha{1}{},\dots,\bfalpha{i}{}}\left[ \sum_{\bfa\in H^{\mu_i}}\f{i}{}(\bfa)\neq S^{(i)}\right]\] does not change, since the two cases corresponding to \ref{it:tf0=f0} and \ref{it:tf0!=f0} are completely unchanged. Note that the definition of $s^{(i)}$ still depends on the actual values of $\tf{i}{0}$, and not some claimed evaluations.

	\paragraph{Query phase.} In the present case, $\tilde \Pr$ sends a commitment $\tilde C$ at the beginning of the query phase, as well as $m$ alleged evaluations $y_i$ at $\bfalpha{i}$ of the commitments $C_i$. With probability $1-\varepsilon'(\lambda)$, where $\varepsilon'$ is negligible,a unique function $\tf{m}{}$ which opens $\tilde C$ can be extracted from $\tilde C$. Replacing $\varepsilon$ with $\max(\varepsilon,\varepsilon')$ if needed, we may suppose that $\varepsilon'\leqslant \varepsilon$.
	 We set 
	\[\tS{m}=\prod_{j=1}^m z^{(j)} S + \sum_{j=1}^m \left( \prod_{\ell=j+1}^m z^{(\ell)} \right) y_j,\]
	the value computed by $\Ve$ at Step \ref{it:sum} of the query phase using the alleged evaluations $y_i$. Since the PCS is computationally $\ell$-batch evaluating binding, there is a negligible function $\sigma$ such that $\PP(\tS{m} \neq S^{(m)}) \leq \sigma(\lambda)$. Recall that with probability $s^{(m)}$, we have $\sum_{a\in H}\f{m}{}(a)\neq S^{(m)}$.
	\begin{enumerate}[{label=(\Alph*')}]
		\item If $\tf{m}{}=\f{m}{}$, then
		\begin{itemize}
			\item either $\tS{m} = S^{(m)}$, and then $\sum_{a\in H}\tf{m}{}\neq S^{(m)}$ so $\Ve$ accepts if and only if the univariate sumcheck on $\tf{m}{}$ passes, which happens with probability $p$,
			\item or $\tS{m} \neq S^{(m)}$ and then $\sum_{a\in H}\tf{m}{}= \tS{m}$ with probability $1/\size{\FF}$, in which case $\Ve$ accepts. And otherwise $\Ve$ accepts if and only if the univariate sumcheck on $(\tf{m}{},\tS{m})$ passes.
		\end{itemize}
		As a result, in this case, the probability that $\Ve$ accepts is at most 
 \[\rho=(1-\sigma(\lambda))p+\sigma(\lambda)\left(\frac{1}{\size{\FF}}+\left(1-\frac{1}{\size{\FF}}\right)p\right)= p+\frac{\sigma(\lambda)}{\size{\FF}}(1-p).\]
		\item If $\tf{m}{}\neq\f{m}{}$, then for $\Ve$ to accept, the openings of $\tf{m}{}$ and the evaluations of $\f{m}{}$ at $\beta$ need to coincide, which happens with probability at most $D/\size{\FF}+\sigma(\lambda)$.
	\end{enumerate} 
	Using the inequality \[ p+\frac{\sigma(\lambda)}{\size{\FF}}(1-p)\leq p+\sigma(\lambda)\]	we obtain that, when no two different functions can be extracted for the same commitment, $\Ve$ accepts with probability at most
	\begin{align*}  (1-s^{(m)})+
		s^{(m)}\left(\max\left(p,\frac{D}{\size{\FF}}\right)+\sigma(\lambda)\right).
	\end{align*}

The result now follows from the expression of $s^{(m)}$ computed in Proposition \ref{prop:soundness-fold}.	
\end{proof}

\subsubsection{Instantiation with {\scshape Zeromorph} (tweaked for $\PoldD$)}\label{sssection:zeromorph}

In 2024, Kohrita and Towa \cite{Zeromorph} built a multilinear commitment scheme, \textit{i.e.} for $d=1$, from any additively homomorphic PCS for \emph{univariate} polynomials, as well as any protocol to check degree bounds on committed polynomials. The construction relies on bilinear pairings. They also instantiate their scheme using the KZG univariate PCS \cite{KZG10} -- in a hiding version to ensure zero knowledge, which we do not require here. This instantiated version is computationally binding, $\ell$-batch evaluation binding and extractable in the algebraic group model under the DLOG assumption in the bilinear group \cite[§4, §6]{Zeromorph}.
We propose a tweaked version of Zeromorph, to get a $(d,D)$-degree PCS, which preserves these properties.

First, (see \cite[\textsection 2.5]{Dory} for instance), any $\mu$-variate polynomial of partial degrees $d_1,\:\dots,\:d_\mu$ can be reformulated as a multilinear polynomial in $\sum_{ 1 \leq i \leq \mu} \lceil\log_2(d_i+1)\rceil$ variables. Concretely, in our case, we set $\delta=\lceil\log_2(d+1)\rceil$ and define the linear isomorphism $\MLin$ between the space $\FF[\bfx_{1:\mu}]_{\leq d}$ of polynomials with partial degrees $\leq d$ and the space $\FF[y_{i,\ell}\mid 1\leq i\leq \mu,0\leq j<\delta]_{\leq 1}$ of multilinear polynomials by
\begin{equation}\label{eq:MLin}
	\MLin(x_i^{\alpha_i})=\prod_{j=0}^{\delta-1} y_{i,j}^{\alpha_{i,j}}
\end{equation}
using the binary decomposition of the exponent $\alpha_i=\sum_{j=0}^{\delta-1} \alpha_{i,j}2^j$. This maps the space of polynomials $\PoldD$ into the set of multilinear polynomials of arity $n=\mu \delta$, which enables us to use the mutilinear PCS Zeromorph to commit to polynomials in $\PoldD$. However, for the soundness of $\DCSF$, we need to make sure that the prover can only commit to polynomials of total degree at most $D$. To achieve this, we shall modify the setup of Zeromorph.

We follow the exposition of \cite[\textsection 2.5]{Zeromorph}. For any integer $n$, there is a linear isomorphism $\U_n$ between the vector space of multilinear polynomials $\FF[y_0, \ldots, y_{n-1}]_{\leq 1}$ in $n$ variables and the space $\FF[t]_{<2^n}$ of univariate polynomials of degree less than $2^n$ defined as

\[
\U_n:\left\{\begin{array}{rcl}
	\FF[y_0, \ldots, y_{n-1}]_{\leq 1} &\rightarrow &\mathbb{F}[t]_{<2^n} \\
	\prod_{j=0}^{n-1}\left(b_j \cdot y_j+\left(1-b_j\right) \cdot\left(1-y_j\right)\right)& \mapsto&\left(t^{2^0}\right)^{b_0} \cdots\left(t^{2^{n-1}}\right)^{b_{n-1}}
\end{array}\right.
\]
for any bits $b_j\in \{ 0,1\}$. In other words, given an $n$-variate multilinear polynomial $g$, we have
\[\U_n(g)=\sum_{(b_0,\dots,b_{n-1}) \in \{0,1\}^n} g(b_0,\dots,b_{n-1})  t^{b_0+ 2 b_1+\dots b_{n-1} 2^{n-1}} \] 

Let $\calF_{d,D}$ be the image of the monomial basis of $\PoldD$ under the composition of the isomorphisms $\MLin$ and $\U_n$ for $n=\mu \lceil\log_2(d+1)\rceil=\mu \delta$. Given a monomial ${  \bfx^{\boldsymbol{\alpha}}=\prod_{i=1}^\mu x_i^{\alpha_i} \in \PoldD}$, we have
\[\U_n(\MLin(\bfx^{\boldsymbol{\alpha}}))=\sum_{\bfb \in \{0,1\}^{\mu\delta}} \prod_{i=1}^\mu \prod_{j=0}^{\delta-1} b_{i,j}^{\alpha_{i,j}} t^{2^{(i-1)\delta +j}}.\]
Then
\begin{equation}
	\calF_{d,D}=\set{\sum_{\bfb \in \{0,1\}^{\mu\delta}}\prod_{j=0}^{\delta-1} b_{i,j}^{\alpha_{i,j}} t^{2^{(i-1)\delta +j}} \Bigm\vert  \forall i, \: \alpha_i \leq d \text{ and }  \alpha_1+\dots+\alpha_\mu \leq D}.
\end{equation}
Every polynomial encountered in the protocol has total degree $\leq D$. To ensure that the prover can only commit to polynomials of total degree at most $D$, he is given a constrained structured reference string.

\begin{protocol}[Zeromorph adapted for $\PoldD$]{zeromorph}
	
	$\textsf{Setup}(1^\lambda,\mu,d,D)$:
	\begin{itemize}
		\item $\mathbb{G}:=\left(p, \mathbb{G}_1, \mathbb{G}_2, \mathbb{G}_T, e\right) \leftarrow \operatorname{GEN}\left(1^\lambda\right)$
		\item $\tau, \xi \leftarrow \mathbb{\mathbb { F } ^ { * }}$
		\item $srs\leftarrow\left(([g(\tau)]_1)_{g \in \calF_{d,D}} ,[\xi]_1,([g(\tau)]_2)_{g \in \calF_{d,D}},[\xi]_2\right)$
		\item Return $\textsf{pp} \leftarrow(\mathbb{G}, s r s)$.
	\end{itemize}
	
	In \textsf{Commit}, \textsf{Open} and \textsf{Eval}, every instance of $\text{KZG}.\textsf{Commit}(\U_n(\cdot))$ is replaced by $\text{KZG}.\textsf{Commit}(\U_n(\MLin(\cdot)))$.
\end{protocol}

Let us study the complexities of this tweaked version.

Each commitment requires ${\rand(\textsf{Commit})=\mu \delta = \mu (\log(d) + O(1))}$ random field elements and $O(d2^{\mu})$ field operations on the prover's side. Note that the transformation of a multivariate polynomials into a univariate one via $\U_n(\MLin(\cdot))$, done on the prover's side, has a negligible computational cost in comparison.
 The $\ell$-batched evaluation protocol $\lEval$ with $\ell=2(\log \mu +1)$ runs in $\round(\lEval)=3$ rounds ($6$ moves, where $\Eval$ requires $5$ moves) and calls for $2$ random elements on the prover's side, and $4$ on the verifier's one, so $\rand(\lEval)=6$. The prover performs $\sft(\Pr_{\ell-\mathsf{PC}})=O(d2^{\mu})$ field operations, whereas the verifier complexity is $\sft(\Ve_{\ell-\mathsf{PC}})=O(\mu \log(d))$ in $\FF$ since $\ell=o(\mu \log(d))$.

The evaluation protocol is computationally sound: a dishonest prover capable of forging a proof of a false evaluation would be able to solve the discrete logarithm problem in a group where it is hard. The complexities are summed up in Table \ref{table:complexities} in the introduction.

\section*{Acknowledgement}
The authors warmly thank the anonymous referees, whose comments and questions drove them to improve the paper. 

\bibliography{biblio}

\newcommand{\etalchar}[1]{$^{#1}$}
\begin{thebibliography}{BSCR{\etalchar{+}}19}

\bibitem[ACY23]{polyFRI}
Gal Arnon, Alessandro Chiesa, and Eylon Yogev.
\newblock Iops with inverse polynomial soundness error.
\newblock In {\em 2023 IEEE 64th Annual Symposium on Foundations of Computer
  Science (FOCS)}, pages 752--761. IEEE, 2023.
\newblock \href {https://doi.org/10.1109/FOCS57990.2023.00050}
  {\path{doi:10.1109/FOCS57990.2023.00050}}.

\bibitem[AFK23]{AFK23}
Thomas Attema, Serge Fehr, and Michael Kloo{\ss}.
\newblock {Fiat--Shamir Transformation of Multi-Round Interactive Proofs
  (Extended Version)}.
\newblock {\em Journal of Cryptology}, 36(4):36, 2023.
\newblock \href {https://doi.org/10.1007/s00145-023-09478-y}
  {\path{doi:10.1007/s00145-023-09478-y}}.

\bibitem[BCS21]{BCS21}
Jonathan Bootle, Alessandro Chiesa, and Katerina Sotiraki.
\newblock {Sumcheck arguments and their applications}.
\newblock In {\em Advances in Cryptology--CRYPTO 2021: 41st Annual
  International Cryptology Conference, CRYPTO 2021, Virtual Event, August
  16--20, 2021, Proceedings, Part I 41}, pages 742--773. Springer, 2021.
\newblock \href {https://doi.org/10.1007/978-3-030-84242-0_26}
  {\path{doi:10.1007/978-3-030-84242-0_26}}.

\bibitem[BFLS91]{BFLG91}
L\'{a}szl\'{o} Babai, Lance Fortnow, Leonid~A. Levin, and Mario Szegedy.
\newblock Checking computations in polylogarithmic time.
\newblock In {\em Proceedings of the Twenty-Third Annual ACM Symposium on
  Theory of Computing}, STOC '91, page 21–32, New York, NY, USA, 1991.
  Association for Computing Machinery.
\newblock \href {https://doi.org/10.1145/103418.103428}
  {\path{doi:10.1145/103418.103428}}.

\bibitem[BFS20]{BFS20}
Benedikt B{\"u}nz, Ben Fisch, and Alan Szepieniec.
\newblock Transparent snarks from dark compilers.
\newblock In {\em Advances in Cryptology--EUROCRYPT 2020: 39th Annual
  International Conference on the Theory and Applications of Cryptographic
  Techniques, Zagreb, Croatia, May 10--14, 2020, Proceedings, Part I 39}, pages
  677--706. Springer, 2020.
\newblock \href {https://doi.org/10.1007/978-3-030-45721-1_24}
  {\path{doi:10.1007/978-3-030-45721-1_24}}.

\bibitem[BSBHR18]{FRI}
Eli Ben-Sasson, Iddo Bentov, Yinon Horesh, and Michael Riabzev.
\newblock {Fast Reed-Solomon Interactive Oracle Proofs of Proximity}.
\newblock In Ioannis Chatzigiannakis, Christos Kaklamanis, D\'{a}niel Marx, and
  Donald Sannella, editors, {\em 45th International Colloquium on Automata,
  Languages, and Programming (ICALP 2018)}, volume 107 of {\em Leibniz
  International Proceedings in Informatics (LIPIcs)}, pages 14:1--14:17,
  Dagstuhl, Germany, 2018. Schloss Dagstuhl -- Leibniz-Zentrum f{\"u}r
  Informatik.
\newblock \href {https://doi.org/10.4230/LIPIcs.ICALP.2018.14}
  {\path{doi:10.4230/LIPIcs.ICALP.2018.14}}.

\bibitem[BSBHR19]{fold}
Eli Ben-Sasson, Iddo Bentov, Yinon Horesh, and Michael Riabzev.
\newblock Scalable zero knowledge with no trusted setup.
\newblock In Alexandra Boldyreva and Daniele Micciancio, editors, {\em Advances
  in Cryptology -- CRYPTO 2019}, pages 701--732, Cham, 2019. Springer
  International Publishing.
\newblock \href {https://doi.org/10.1007/978-3-030-26954-8_23}
  {\path{doi:10.1007/978-3-030-26954-8_23}}.

\bibitem[BSCR{\etalchar{+}}19]{aurora}
Eli Ben-Sasson, Alessandro Chiesa, Michael Riabzev, Nicholas Spooner, Madars
  Virza, and Nicholas~P. Ward.
\newblock Aurora: Transparent succinct arguments for r1cs.
\newblock In Yuval Ishai and Vincent Rijmen, editors, {\em Advances in
  Cryptology -- EUROCRYPT 2019}, pages 103--128, Cham, 2019. Springer
  International Publishing.
\newblock \href {https://doi.org/10.1007/978-3-030-17653-2_4}
  {\path{doi:10.1007/978-3-030-17653-2_4}}.

\bibitem[BSCS16]{BCS16}
Eli Ben-Sasson, Alessandro Chiesa, and Nicholas Spooner.
\newblock {Interactive oracle proofs}.
\newblock In {\em Theory of Cryptography: 14th International Conference, TCC
  2016-B, Beijing, China, October 31-November 3, 2016, Proceedings, Part II
  14}, pages 31--60. Springer, 2016.
\newblock \href {https://doi.org/10.1007/978-3-662-53644-5_2}
  {\path{doi:10.1007/978-3-662-53644-5_2}}.

\bibitem[CCH{\etalchar{+}}18]{Fiat-Shamir}
Ran Canetti, Yilei Chen, Justin Holmgren, Alex Lombardi, Guy~N. Rothblum, and
  Ron~D. Rothblum.
\newblock Fiat-shamir from simpler assumptions.
\newblock Cryptology {ePrint} Archive, Paper 2018/1004, 2018.
\newblock URL: \url{https://eprint.iacr.org/2018/1004}.

\bibitem[DL78]{DL78}
Richard~A. Demillo and Richard~J. Lipton.
\newblock {A probabilistic remark on algebraic program testing}.
\newblock {\em Information Processing Letters}, 7(4):193--195, 1978.
\newblock \href {https://doi.org/10.1016/0020-0190(78)90067-4}
  {\path{doi:10.1016/0020-0190(78)90067-4}}.

\bibitem[GKR15]{GKR}
Shafi Goldwasser, Yael~Tauman Kalai, and Guy~N Rothblum.
\newblock Delegating computation: interactive proofs for muggles.
\newblock {\em Journal of the ACM (JACM)}, 62(4):1--64, 2015.
\newblock URL: \url{10.1145/2699436}.

\bibitem[GLS{\etalchar{+}}23]{brakedown}
Alexander Golovnev, Jonathan Lee, Srinath Setty, Justin Thaler, and Riad~S.
  Wahby.
\newblock Brakedown: Linear-time and field-agnostic snarks for r1cs.
\newblock In Helena Handschuh and Anna Lysyanskaya, editors, {\em Advances in
  Cryptology -- CRYPTO 2023}, pages 193--226, Cham, 2023. Springer Nature
  Switzerland.
\newblock \href {https://doi.org/10.1007/978-3-031-38545-2_7}
  {\path{doi:10.1007/978-3-031-38545-2_7}}.

\bibitem[GMR89]{GMR89}
Shafi Goldwasser, Silvio Micali, and Charles Rackoff.
\newblock {The Knowledge Complexity of Interactive Proof Systems}.
\newblock {\em SIAM Journal on Computing}, 18(1):186--208, 1989.
\newblock \href {https://doi.org/10.1137/0218012} {\path{doi:10.1137/0218012}}.

\bibitem[KT24]{Zeromorph}
Tohru Kohrita and Patrick Towa.
\newblock Zeromorph: Zero-knowledge multilinear-evaluation proofs from
  homomorphic univariate commitments.
\newblock {\em Journal of Cryptology}, 37(4):38, 2024.
\newblock \href {https://doi.org/10.1007/s00145-024-09519-0}
  {\path{doi:10.1007/s00145-024-09519-0}}.

\bibitem[KZG10]{KZG10}
Aniket Kate, Gregory~M Zaverucha, and Ian Goldberg.
\newblock Constant-size commitments to polynomials and their applications.
\newblock In {\em Advances in Cryptology-ASIACRYPT 2010: 16th International
  Conference on the Theory and Application of Cryptology and Information
  Security, Singapore, December 5-9, 2010. Proceedings 16}, pages 177--194.
  Springer, 2010.
\newblock \href {https://doi.org/10.1007/978-3-642-17373-8_11}
  {\path{doi:10.1007/978-3-642-17373-8_11}}.

\bibitem[Lee21]{Dory}
Jonathan Lee.
\newblock Dory: Efficient, transparent arguments for generalised inner products
  and polynomial commitments.
\newblock In Kobbi Nissim and Brent Waters, editors, {\em Theory of
  Cryptography}, pages 1--34, Cham, 2021. Springer International Publishing.
\newblock \href {https://doi.org/10.1007/978-3-030-90453-1_1}
  {\path{doi:10.1007/978-3-030-90453-1_1}}.

\bibitem[LFKN92]{LFK92}
Carsten Lund, Lance Fortnow, Howard Karloff, and Noam Nisan.
\newblock Algebraic methods for interactive proof systems.
\newblock {\em Journal of the ACM (JACM)}, 39(4):859--868, 1992.
\newblock \href {https://doi.org/10.1145/146585.14660}
  {\path{doi:10.1145/146585.14660}}.

\bibitem[MF21]{hashoracle}
Arno Mittelbach and Marc Fischlin.
\newblock The theory of hash functions and random oracles.
\newblock {\em An Approach to Modern Cryptography, Cham: Springer Nature},
  2021.
\newblock \href {https://doi.org/10.1007/978-3-030-63287-8}
  {\path{doi:10.1007/978-3-030-63287-8}}.

\bibitem[NST24]{brakingbase}
Vineet Nair, Ashish Sharma, and Bhargav Thankey.
\newblock Brakingbase-a linear prover, poly-logarithmic verifier, field
  agnostic polynomial commitment scheme.
\newblock {\em Cryptology ePrint Archive}, 2024.
\newblock URL: \url{https://eprint.iacr.org/2024/1825}.

\bibitem[Sch80]{Schwartz80}
J.~T. Schwartz.
\newblock {Fast Probabilistic Algorithms for Verification of Polynomial
  Identities}.
\newblock {\em J. ACM}, 27(4):701–717, October 1980.
\newblock \href {https://doi.org/10.1145/322217.322225}
  {\path{doi:10.1145/322217.322225}}.

\bibitem[Set20]{Spartan20}
Srinath Setty.
\newblock Spartan: Efficient and general-purpose {zkSNARKs} without trusted
  setup.
\newblock In {\em Annual International Cryptology Conference}, pages 704--737.
  Springer, 2020.
\newblock \href {https://doi.org/10.1007/978-3-030-56877-1_25}
  {\path{doi:10.1007/978-3-030-56877-1_25}}.

\bibitem[T{\etalchar{+}}22]{thaler}
Justin Thaler et~al.
\newblock Proofs, arguments, and zero-knowledge.
\newblock {\em Foundations and Trends{\textregistered} in Privacy and
  Security}, 4(2--4):117--660, 2022.

\bibitem[WTS{\etalchar{+}}18]{Hyrax18}
Riad~S Wahby, Ioanna Tzialla, Abhi Shelat, Justin Thaler, and Michael Walfish.
\newblock Doubly-efficient zksnarks without trusted setup.
\newblock In {\em 2018 IEEE Symposium on Security and Privacy (SP)}, pages
  926--943. IEEE, 2018.
\newblock \href {https://doi.org/10.1109/SP.2018.00060}
  {\path{doi:10.1109/SP.2018.00060}}.

\bibitem[ZCF24]{basefold}
Hadas Zeilberger, Binyi Chen, and Ben Fisch.
\newblock Basefold: Efficient field-agnostic polynomial commitment schemes
  from foldable codes.
\newblock In Leonid Reyzin and Douglas Stebila, editors, {\em Advances in
  Cryptology -- CRYPTO 2024}, pages 138--169, Cham, 2024. Springer Nature
  Switzerland.
\newblock \href {https://doi.org/10.1007/978-3-031-68403-6_5}
  {\path{doi:10.1007/978-3-031-68403-6_5}}.

\bibitem[Zip79]{Zippel79}
Richard Zippel.
\newblock {Probabilistic algorithms for sparse polynomials}.
\newblock In Edward~W. Ng, editor, {\em {Symbolic and Algebraic Computation}},
  pages 216--226, Berlin, Heidelberg, 1979. Springer Berlin Heidelberg.
\newblock \href {https://doi.org/10.1007/3-540-09519-5_73}
  {\path{doi:10.1007/3-540-09519-5_73}}.

\end{thebibliography}

\end{document}